\newtheorem{defn}{Definition}
\newtheorem{theorem}[defn]{Theorem}
\newtheorem{lemma}[defn]{Lemma}
\newtheorem{proposition}[defn]{Proposition}
\newtheorem{corollary}[defn]{Corollary}
\newtheorem{claim}{Claim}
\newenvironment{definition}{\defn \em }
 {\vspace{0.5 em}}
\newenvironment{example}{\eg \em }
 {\vspace{0.5 em}}
\newenvironment{remark}{\rem \em }
 {\vspace{0.5 em}}
\newenvironment{proof}
{\vspace{0.5 em} \noindent {\bf Proof.}}{\QED {\vspace{0.5 em}}}
\def\squareforqed{$\Box$}
\def\QED{\ifmmode\squareforqed\else{\unskip\nobreak\hfil
\penalty50\hskip1em\null\nobreak\hfil\squareforqed
\parfillskip = 0pt\finalhyphendemerits = 0\endgraf}\fi}
\newcommand{\ignore}[1] { }
\newcommand{\sm}[1]{\mbox{\scriptsize$#1$}}
\newcommand{\mm}[1]{\mbox{\boldmath \tiny$#1$}}
\newcommand{\A}{\mbox{$\mathcal{A}$}}
\newcommand{\cC}{\mbox{$\mathcal{C}$}}
\newcommand{\cD}{\mbox{$\mathcal{D}$}}
\newcommand{\cR}{\mbox{$\mathcal{R}$}}
\newcommand{\bv}{\mbox{\boldmath$v$}}
\newcommand{\sA}{{\mm \A}}
\newcommand{\ttU}{\mbox{$\texttt{U}$}}
\newcommand{\ttX}{\mbox{$\texttt{X}$}}
\newcommand{\sttU}{\mbox{$\texttt{\scriptsize U}$}}
\newcommand{\sttX}{\mbox{$\texttt{\scriptsize X}$}}
\newcommand{\ttLift}{\mbox{$\texttt{lift-pebble}$}}
\newcommand{\ttPlace}{\mbox{$\texttt{place-pebble}$}}
\newcommand{\ttLeft}{\mbox{$\texttt{left}$}}
\newcommand{\ttRight}{\mbox{$\texttt{right}$}}
\newcommand{\ttStay}{\mbox{$\texttt{stay}$}}
\newcommand{\ttAct}{\mbox{$\texttt{act}$}}
\newcommand{\sttLift}{\mbox{$\texttt{\scriptsize lift}$}}
\newcommand{\sttPlace}{\mbox{$\texttt{\scriptsize place}$}}
\newcommand{\sttRight}{\mbox{$\texttt{\scriptsize right}$}}
\newcommand{\sttStay}{\mbox{$\texttt{\scriptsize stay}$}}
\newcommand{\ttInc}{\mbox{$\texttt{inc}$}}
\newcommand{\ttDec}{\mbox{$\texttt{dec}$}}
\newcommand{\ttIfz}{\mbox{$\texttt{ifz}$}}
\newcommand{\sfProj}{\mbox{$\textsf{Proj}$}}
\newcommand{\sfCont}{\mbox{$\textsf{Cont}$}}
\newcommand{\sfTrue}{\mbox{$\textsf{True}$}}
\newcommand{\sfFalse}{\mbox{$\textsf{False}$}}
\newcommand{\sffqr}{\mbox{$\textsf{fqr}$}}
\newcommand{\sfEnc}{\mbox{$\textsf{Enc}$}}
\newcommand{\fD}{\mbox{$\mathfrak{D}$}}
\newcommand{\bbN}{\mbox{$\mathbb{N}$}}
\newcommand{\scriptfD}{\sm \fD}
\begin{document}

\title{On Pebble Automata for Data Languages
with Decidable Emptiness Problem\thanks{This work was done
while the author was in the Department of Computer Science
in Technion~--~Israel Institute of Technology.
It can also be found as a technical report in~\cite{Tan09TopViewTechReport}.}}

\author{
Tony Tan
\\
School of Informatics
\\
University of Edinburgh
\\
Email: \texttt{ttan@inf.ed.ac.uk}
}

\date{}
\maketitle

\begin{abstract}
In this paper we study a subclass of pebble automata (PA)
for data languages for which the emptiness problem is decidable.

Namely, we introduce the so-called {\em top view} weak PA.
Roughly speaking, top view weak PA are weak PA where
the equality test is performed only between
the data values seen by the two most recently placed pebbles.
The emptiness problem for this model is decidable.
We also show that it is robust:
alternating, nondeterministic and deterministic top view weak PA
have the same recognition power.
Moreover, this model is strong enough to accept all data languages
expressible in Linear Temporal Logic with the future-time operators,
augmented with one register freeze quantifier.
\end{abstract}

\section{Introduction}
\label{s: intro}

Regular languages are clearly one of the most important concepts in computer science.
They have applications in basically all branches of computer science.
It can be argued that the following properties contributed to their success.
\begin{enumerate}
\item
Expressiveness:
In many settings regular languages are powerful enough
to capture the kinds of patterns that have to described.
\item
Decidability:
Unlike many general computational models,
the mechanisms associated with regular languages allow one to perform
automated semantic analysis.
\item
Efficiency:
The model checking problem, that is, testing whether a given string
is accepted by a given automaton can be solved in polynomial time.
\item
Closure properties:
The regular languages possess all important closure properties.
\item
Robustness:
The class of regular languages has many characterizations.
For example, various extensions like nondeterminism and alternation
do not add any expressive power.
Another characterizations include regular expressions,
monoids and monadic second-order logic.
\end{enumerate}

Moreover, similar notion of regularity has been successfully generalized
to other kind of structures,
including infinite strings and finite, as well as infinite,
ranked or unranked, trees.
Most recent applications of regular languages
(on infinite strings and finite, unranked trees, respectively)
are in model checking and XML processing.
\begin{itemize}
\item
In model checking a system is a finite state one and
properties are specified in a logic like LTL.
Satisfiability of a formula in a system is checked
on the structure that is the product of the system automaton
and an automaton corresponding to the formula.
The step from the ``real'' system to its finite state representation
usually involves many abstraction, especially with respect to
data values (variables, process numbers, etc.).
Often their range is restricted to a finite domain.
\\
Even though this approach has been successful and found its way into
large scale industrial applications, the finite abstraction have some
inherent shortcomings.
As as example, $n$ identical processes with $m$ states
each give rise to an overall model of size $m^n$.
If the number of processes is unbounded or unknown in advance,
the finite state approach fails.
Previous work has shown that even in such setting
decidability can be obtained by restricting the problem 
in various ways~\cite{AbdullaJNS04,EmersonN95}.
\item
In XML document processing,
regular concepts occur in various contexts.
First, most applications restrict the structure of the allowed documents
to conform to a certain specification (DTD or XML schema),
which can be modeled as a regular tree language.
Second, navigation (XPath) and transformation (XSLT) languages
are tightly connected to various tree automata models and
other regular description mechanism, see, for example,~\cite{Neven02}.
\\
All these approaches concentrate on the structure of the XML documents
and ignore the attribute and text values.
From a database point of view, this is not completely satisfactory,
because a {\em schema} should allow one not only to describe 
the structure of the data, but also to define restrictions on the data values
via integrity constraints such as key or inclusion constraints.
There exist a work addressing this problem~\cite{ArenasFL05},
but like in the case of model checking,
the methods rely heavily on a case-to-case analysis. 
\end{itemize}
So, in the above settings, the finite state abstraction leads to
interesting results,
but does not address all problems arising in applications.
In both cases, it would be already a big advance, if each position,
in either a string or a tree, could carry a {\em data value},
in addition to its label.

This paper is part of a broader research program
which aims at studying such extensions in a systematic way.
As any kind of operations on the infinite domain quickly leads to
undecidability of basic processing tasks (even a linear order on the domain is harmful),
we concentrate on the setting, where data values can only be tested for equality.
Furthermore, in this paper we only consider {\em finite data strings},
that is, finite strings, where each position carries a label
from a finite alphabet and a data value from an infinite domain.
Recently, there has been a significant amount of work in this direction,
see~\cite{BjorklundS07,BojanczykMSSD06,DemriL06,KaminskiF94,NevenSV04,Segoufin06}.

Roughly speaking, there are two approaches to studying data languages:
logic and automata.
Below is a brief survey on both approaches.
For a more comprehensive survey,
we refer the reader to~\cite{Segoufin06}.
The study of data languages, which can also be viewed as languages over infinite alphabets,
starts with the introduction of {finite-memory} automata (FMA)
in~\cite{KaminskiF94},
which are also known as {\em register automata} (RA).
The study of RA was continued and extended in~\cite{NevenSV04},
in which {\em pebble automata} (PA) were also introduced.
Each of both models has its own advantages and disadvantages.
Languages accepted by FMA are closed under standard language operations:
intersection, union, concatenation, and Kleene star.
In addition, from the computational point of view,
FMA are a much easier model to handle.
Their emptiness problem is decidable,
whereas the same problem for PA is not.
However, the PA languages possess a very nice logical property:
closure under {\em all} boolean operations,
whereas FMA languages are not closed under complementation.

Later in~\cite{BojanczykMSSD06} first-order logic
for data languages was considered, and,
in particular, the so-called {\em data automata} was introduced.
It was shown that data automata define the fragment of
existential monadic second order logic for data languages in which
the first order part is restricted to two variables only.
An important feature of data automata is
that their emptiness problem is decidable, even for the infinite words,
but is at least as hard as reachability for Petri nets.
The automata themselves always work nondeterministically and
seemingly cannot be determinized, see~\cite{BjorklundS07}.
It was also shown that the satisfiability problem for
the three-variable first order logic is undecidable.

Another logical approach is via the so called
{\em linear temporal logic with
$n$ register freeze quantifier over the labels $\Sigma$},
denoted
LTL$_n^{\downarrow}(\Sigma,\texttt{X},\texttt{U})$,
see~\cite{DemriL06}.
It was shown that one way alternating $n$ register automata
accept all LTL$_n^{\downarrow}(\Sigma,\texttt{X},\texttt{U})$ languages
and the emptiness problem for one way alternating one register automata
is decidable.
Hence, the satisfiability problem for
LTL$_1^{\downarrow}(\Sigma,\texttt{X},\texttt{U})$
is decidable as well.
Adding one more register or past time operators to
LTL$_1^{\downarrow}(\Sigma,\texttt{X},\texttt{U})$
makes the satisfiability problem undecidable.

In this paper we continue the study of PA,
which are finite state automata with a finite number of pebbles.
The pebbles are placed on/lifted from the input word
in the stack discipline~-- first in last out~--
and are intended to mark positions in the input word.
One pebble can only mark one position and
the most recently placed pebble serves as the head of the automaton.
The automaton moves from one state to another
depending on the current label and
the equality tests among data values in the positions
currently marked by the pebbles, as well as,
the equality tests among the positions of the pebbles.

Furthermore,
as defined in~\cite{NevenSV04},
there are two types of PA,
according to the position of the new pebble placed.
In the first type, the ordinary PA, also called {\em strong} PA,
the new pebbles are placed at the beginning of the string.
In the second type, called {\em weak} PA,
the new pebbles are placed at the position of the most recent pebble.
Obviously, two-way weak PA is just as expressive as
two-way ordinary PA.
However, it is known that
one-way nondeterministic weak PA are weaker than
one-way ordinary PA, see~\cite[Theorem~4.5.]{NevenSV04}.

We show that the emptiness problem for
one-way weak 2-pebble automata is decidable,
while the same problem for one-way weak 3-pebble automata
is undecidable.
We also introduce the so-called {\em top view} weak PA.
Roughly speaking, top view weak PA are one-way weak PA
where the equality test is performed only
between the data values seen by
the two most recently placed pebbles.
Top view weak PA are quite robust:
alternating, nondeterministic and deterministic top view weak PA
have the same recognition power.
To the best of our knowledge,
this is the first model of computation
for data language with such robustness.
It is also shown that top view weak PA
can be simulated by one-way alternating one-register RA.
Therefore, their emptiness problem is decidable.
Another interesting feature is top view weak PA
can simulate all LTL$_1^{\downarrow}(\Sigma,\ttX,\ttU)$ languages,
and the number of pebbles needed to simulate such LTL sentences
corresponds linearly to the so called {\em free quantifier rank} of the sentences,
the depth of the nesting level of the freeze operators in the sentence.

This paper is organized as follows.
In Section~\ref{s: model} we review the models of computations
for data languages considered in this paper.
Section~\ref{s: decidability weak 2-pa} and Section~\ref{s: complexity weak pa}
deals with the decidability and the complexity issues of weak PA, respectively.
In Section~\ref{s: top view weak k-pa}
we introduce top view weak PA.
We also introduce a simple extension to top view weak PA,
called unbounded top view weak PA, in which
the number of pebbles is unbounded in Section~\ref{s: unbounded} 
Finally, we end our paper with a brief observation
in Section~\ref{s: conclusion}.
This paper is augmented with appendices that contain most of the omitted details.

\section{Models of computations}
\label{s: model}

In Subsections~\ref{ss: pa} and~\ref{ss: hierarchy weak pa}
we recall the definition of weak PA from~\cite{NevenSV04},
and review the strict hierarchy of weak PA languages established in~\cite{Tan09Reach}.
In Subsection~\ref{ss: ltl}
we recall the temporal logical framework for data languages.

We will use the following notation.
We always denote by $\Sigma$ a finite alphabet of {\em labels}
and by $\fD$ an infinite set of {\em data values}.
A {\em $\Sigma$-data word}
$w =
{\sigma_1 \choose a_1}
{\sigma_2 \choose a_2}
\cdots
{\sigma_n \choose a_n}$
is a finite sequence over $\Sigma \times \fD$,
where $\sigma_i \in \Sigma$ and $a_i \in \fD$.
A {\em $\Sigma$-data language} is a set of $\Sigma$-data words.
The idea is that the alphabet $\Sigma$ is accessed directly,
while data values can only be tested for equality.

We assume that neither of $\Sigma$ and $\fD$ contain
the left-end marker $\triangleleft$ or
the right-end marker  $\triangleright$.
The input word to the automaton is of the form
$\triangleleft w \triangleright$,
where $\triangleleft$ and $\triangleright$
mark the left-end and the right-end of the input word.

We will also use the following notations.
For $w = {\sigma_1\choose a_1}\cdots {\sigma_n\choose a_n}$,
\begin{eqnarray*}
\textsf{Proj}_{\Sigma}(w) & = & \sigma_1\cdots\sigma_n
\\
\sfProj_{\scriptfD}(w) & = & a_1\cdots a_n
\\
\sfCont_{\Sigma}(w) & = & \{\sigma_1,\ldots,\sigma_n\}
\\
\sfCont_{\scriptfD}(w) & = & \{a_1,\ldots,a_n\}
\end{eqnarray*}

Finally, the symbols $\nu,\vartheta,\sigma,\ldots$,
possibly indexed, denote labels in $\Sigma$ and
the symbols $a,b,c,d,\ldots$, possibly indexed,
denote data values in $\fD$.

\subsection{Pebble automata}
\label{ss: pa}

\begin{definition}
\label{d: pa}
(See~\cite[Definition~2.3]{NevenSV04})
A {\em one-way alternating weak $k$-pebble automaton}
or, in short, $k$-PA, over $\Sigma$
is a system $\A = \langle \Sigma,Q, q_0, F, \mu, U \rangle$
whose components are defined as follows.
\begin{itemize}
\item
$Q$, $q_0 \in Q$ and $F\subseteq Q$ are
a finite set of {\em states},
the {\em initial state},
and the set of {\em final states}, respectively;
\item
$U \subseteq Q - F$ is the set of {\em universal} states; and
\item
$\mu \subseteq \cC \times \cD$ is the transition relation, where
\begin{itemize}
\item
$\cC$ is a set whose elements are of the form $(i,\sigma,V,q)$
where
$1\leq i\leq k$, $\sigma \in \Sigma$, $V \subseteq \{i+1,\ldots,k\}$
and $q\in Q$; and
\item
$\cD$ is a set whose elements are of the form $(q,\ttAct)$,
where $q \in Q$ and
$
\ttAct \in \{\ttStay,\ttRight,\ttPlace,\ttLift \}.
$
\end{itemize}
Elements of $\mu$ will be written as
$(i,\sigma,V,q) \rightarrow (p,\ttAct)$.
\end{itemize}
\end{definition}

\begin{remark}
\label{r: pebble numbering}
Note that the pebble numbering that differs from that
in~\cite{NevenSV04}.
In the above definition we adopt the pebble numbering
from~\cite{BojanczykSSS06} in which the pebbles placed
on the input word are numbered from $k$ to $i$ and
not from $1$ to $i$ as in~\cite{NevenSV04}.
The reason for this reverse numbering is that it allows us
to view the computation between placing and lifting pebble~$i$ as
a computation of an $(i-1)$-pebble automaton.

Furthermore, the automaton is no longer equipped with
the ability to compare positional equality,
in contrast with the ordinary PA introduced in~\cite{NevenSV04}.
Such ability no longer makes any difference because
the new pebbles are placed in the ``weak'' manner.
\end{remark}

Given a word
$w =
{\sigma_1 \choose a_1}
\cdots
{\sigma_n\choose a_n}
\in
(\Sigma\times\fD)^\ast$,
a {\em configuration of $\A$ on $\triangleleft w \triangleright$}
is a triple $[i,q,\theta]$, where
$i \in \{1,\ldots,k\}$, $q \in Q$, and
$\theta : \{i,i+1,\ldots,k\} \rightarrow \{ 0,1,\ldots,n,n + 1 \}$,
where $0$ and $n+1$ are positions of
the end markers $\triangleleft$ and $\triangleright$, respectively.
The function $\theta$ defines the position of the pebbles and
is called the {\em pebble assignment}.
The {\em initial} configuration is $\gamma_0 = [k,q_0,\theta_0]$,
where $\theta_0(k) = 0$ is the {\em initial} pebble assignment.
A configuration $[i,q,\theta]$ with $q \in F$ is called
an {\em accepting} configuration.

A transition $(i, \sigma ,V,p) \rightarrow \beta$
{\em applies to a configuration $[j,q,\theta]$}, if
\begin{enumerate}
\item[$(1)$]
$i = j$ and $p = q$,
\item[$(2)$]
$V = \{ l > i : a_{\theta(l)} = a_{\theta(i)} \}$, and
\item[$(3)$]
$ \sigma_{\theta(i)} = \sigma$.
\end{enumerate}

Next we define the transition relation $\vdash_{\sA}$ as follows:
$[i,q,\theta]\vdash_{\sA} [i^\prime,q^\prime,\theta^\prime]$, if
there is a transition $\alpha \rightarrow (p,\ttAct) \in \mu$
that applies to $[i,q,\theta]$ such that
$q^\prime = p$,
for all $j > i$, $\theta^\prime(j)=\theta(j)$, and
\begin{itemize}
\item[-]
if $\ttAct = \ttStay$,
then $i^\prime=i$ and $\theta^\prime(i)=\theta(i)$,
\item[-]
if $\ttAct = \ttRight$,
then $i^\prime=i$ and $\theta^\prime(i)=\theta(i)+1$,
\item[-]
if $\ttAct = \ttLift$,
then $i^\prime=i+1$,
\item[-]
if $\ttAct = \ttPlace$,
then $i^\prime=i-1$, $\theta^\prime(i-1) = \theta(i)$ and
$\theta^\prime(i)=\theta(i)$.
\end{itemize}
As usual,
we denote the reflexive transitive closure of $\vdash_{\sA}$ by
$\vdash^\ast_{\sA}$.
When the automaton $\A$ is clear from the context,
we shall omit the subscript $\A$.

The acceptance criteria is based on the notion of
{\em leads to acceptance} below.
For every configuration $\gamma=[i,q,\theta]$,
\begin{itemize}
\item
if $q \in F$, then $\gamma$ leads to acceptance;
\item
if $q \in U$, then
$\gamma$ leads to acceptance if and only if
for all configurations $\gamma'$ such that
$\gamma\vdash\gamma'$, $\gamma'$ leads to acceptance;
\item
if $q \notin F\cup U$,
then $\gamma$ leads to acceptance if and only if
there is at least one configuration $\gamma'$ such that
$\gamma \vdash \gamma'$, and $\gamma'$ leads to acceptance.
\end{itemize}
A $\Sigma$-data word $w \in (\Sigma\times\fD)^\ast$ is accepted by $\A$,
if $\gamma_0$ leads to acceptance.
The language $L(\A)$ consists of all data words accepted by $\A$.

The automaton $\A$ is {\em nondeterministic},
if the set $U=\emptyset$, and it is {\em deterministic},
if there is exactly one transition that applies
for each configuration.
It turns out that weak PA languages are quite robust.

\begin{theorem}
\label{t: equivalence}
For all $k \geq 1$,
alternating, non-deterministic and deterministic weak $k$-PA
have the same recognition power.
\end{theorem}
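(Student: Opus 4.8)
The inclusions ``deterministic $\subseteq$ nondeterministic $\subseteq$ alternating'' are immediate, since a deterministic weak $k$-PA is a nondeterministic one with a single applicable transition per configuration, and a nondeterministic weak $k$-PA is an alternating one with $U=\emptyset$. The whole content of the theorem is therefore to simulate an arbitrary alternating weak $k$-PA $\A$ by a deterministic weak $k$-PA. The plan is to do this by induction on $k$, via a subset (powerset) construction adapted to the one-way weak pebble discipline, with the base case $k=1$ being the classical determinization of alternating one-way finite automata: when only pebble~$1$ is present the set $V$ is always empty and the data values are never compared, so a weak $1$-PA is nothing but an alternating one-way automaton over $\Sigma$.

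The key structural feature I would exploit is the recursive shape of a run already highlighted in Remark~\ref{r: pebble numbering}. Fix a configuration $[i+1,q,\theta]$ in which pebble $i+1$ is the head at position $p=\theta(i+1)$. A transition with action $\ttPlace$ drops pebble $i$ at the \emph{same} position $p$ and hands control to pebble $i$; from then on pebbles $i+1,\dots,k$ are frozen and the sub-run uses only pebbles $i,i-1,\dots,1$, exactly as a computation of a smaller automaton. Crucially, by the transition rules the only way this sub-run can affect the rest of the computation is through the action $\ttLift$ of pebble $i$: when that happens, pebble $i+1$ resumes in the lifted-to state at its \emph{unchanged} position $p$, regardless of how far pebble $i$ had travelled. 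Thus the entire interface between a level-$i$ sub-run and its environment is a single ``exit state'' of $Q$, and the data at the frozen pebbles enters only through the equality pattern $V=\{\,l>i:a_{\theta(l)}=a_{\theta(i)}\,\}$ read during the scan. It is precisely this finiteness of the interface, which fails for two-way or strong PA, that makes determinization possible here.

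Building on this, I would summarize each level-$i$ sub-run by a \emph{behaviour function}: for every state $q$ in which pebble $i$ may be placed, a monotone Boolean function over variables $\{x_{q'}\}_{q'\in Q}$, where $x_{q'}$ stands for ``pebble $i+1$ resuming in state $q'$ at $p$ leads to acceptance''; the sub-run leads to acceptance exactly when this function holds under the truth assignment determined by the environment. The universal and existential states of $\A$ turn into conjunctions and disjunctions, a state of $F$ contributes the constant true, and a stuck existential state the constant false. Since there are only finitely many such functions over the fixed set $Q$, they can serve as the control states of the deterministic simulator. The deterministic weak $k$-PA I construct mirrors the pebble moves of $\A$ in lockstep: to decide its move while pebble $i+1$ sits at $p$, it places pebble $i$, computes the relevant behaviour function by a single deterministic left-to-right scan --- available by the induction hypothesis applied at level $i$ --- then lifts pebble $i$ back to $p$ carrying that function in its finite control, and finally uses it to fix a unique next action. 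Because lifts always return to the frozen position, recomputing a summary on demand (rather than storing one per position) keeps the simulator one-way and within $k$ pebbles.

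The step I expect to be the main obstacle is the correct treatment of the alternating acceptance condition \emph{across} the lift interface. The truth value of a variable $x_{q'}$ is itself decided by the future behaviour of pebble $i+1$, which is exactly what the simulator is trying to compute, so the behaviour functions at one level are entangled with acceptance at the same level through a fixpoint. Resolving this cleanly --- showing that the fixpoint implied by the ``leads to acceptance'' definition is computed faithfully by the subset construction, that the delicate boundary cases (stuck universal states, which accept vacuously, versus stuck existential states, which reject; and lifting the outermost pebble $k$) are handled with the right polarity, and that all of this is realizable by a genuinely deterministic, one-way weak $k$-PA --- is where the real work lies. The finiteness of the behaviour functions and the single-exit-state interface are what make it go through.
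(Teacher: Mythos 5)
Your overall architecture is a genuinely different route from the paper's: you propose a direct alternating-to-deterministic construction by induction on the pebble depth, summarizing each level-$i$ sub-run by a monotone Boolean function over exit states, whereas the paper (Appendix~\ref{app: s: equivalence weak pa}) goes in two steps, first alternating to nondeterministic via a ``set of obligations, all of which must accept'' construction, and then nondeterministic to deterministic via a powerset construction. Your structural observations are sound and are indeed what makes any such argument work: with one pebble the model degenerates to an alternating one-way automaton over $\Sigma$, and the only interface between a level-$i$ sub-run and its environment is the single state in which pebble $i$ is lifted, pebble $i+1$ resuming at its unchanged position. A monotone Boolean function over $Q$ is just an antichain of subsets of $Q$, so your summaries are equivalent in content to the paper's sets-of-sets; in principle your route buys a one-shot construction at the price of a heavier invariant.

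However, there is a genuine gap, and it sits exactly where you place it yourself: you name the fixpoint across the lift interface as ``where the real work lies'' and then do not do that work. Worse, the one concrete claim you make about how the summary is used --- that the simulator lifts pebble $i$ ``carrying that function in its finite control, and finally uses it to fix a unique next action'' --- cannot be right as stated. The behaviour function's value depends on the variables $x_{q'}$, i.e.\ on whether pebble $i+1$ resuming in state $q'$ leads to acceptance, and that is determined only by the as-yet-unread suffix; no local computation at position $p$ can resolve it into a single action. What a one-way deterministic simulator must actually do is carry forward a \emph{composed} monotone function (equivalently, a set of sets of states of $\A$) expressing acceptance of the initial configuration in terms of the behaviour of pebble $k$ from the current position, compose it with the per-position summaries as it moves right, and evaluate it only at the right-end marker. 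A second omission is the synchronization of actions: distinct alternating branches issued from the same configuration may demand different actions ($\ttPlace$ versus $\ttRight$ versus $\ttLift$ versus $\ttStay$), and a deterministic machine tracking all of them in parallel can take only one. The paper resolves this with an explicit normalization (every pebble $i<k$ is lifted only at the right-end marker, pebble $i-1$ is placed immediately after every right-move of pebble $i$, universal states only $\ttStay$), which forces all branches into lockstep and is precisely the step that makes a parallel deterministic sub-run possible; your sketch needs an equivalent device before the induction can be carried out.
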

The proof is quite standard.
For the details of the proof, we refer the reader to
Appendix~\ref{app: s: equivalence weak pa}.

Next, we define the hierarchy of languages accepted by PA.
For $k \geq 1$,
We define the following classes of languages.
\begin{eqnarray*}
\textrm{wPA}_k & = & \{L : L \textrm{ is accepted by a weak }k\textrm{-PA}\}; \ \mbox{and}
\\
\textrm{wPA} & = &  \bigcup_{k \geq 1} \textrm{wPA}_k
\end{eqnarray*}

This example will be useful in the subsequent section.
\begin{example}
\label{e: monadic consistent}
Consider a $\Sigma$-data language $L_{\sim}$ defined as follows.
A $\Sigma$-data word
$w={\sigma_1\choose a_1} \cdots {\sigma_n\choose a_n} \in L_{\sim}$
if and only if for all $i,j=1,\ldots,n$,
if $a_i=a_j$, then $\sigma_i=\sigma_j$.
That is, $w \in L_{\sim}$ if and only if
whenever two positions in $w$ carry the same data value,
their labels are the same.

The language $L_{\sim}$ is accepted by weak $2$-PA
which works in the following manner.
Pebbles~2 iterates through all possible positions in $w$.
At each iteration, pebble~1 is placed and
scans through all the positions to the right of pebble~2,
checking whether there is a position with the same data value of pebble~2.
If there is such position, then
the labels seen by pebbles~1 and~2 are the same.
\end{example}

\subsection{Strict hierarchy of weak PA languages}
\label{ss: hierarchy weak pa}

In this section we review an example of data language
introduced in~\cite{Tan09Reach}.
It will be useful in establishing our definability results
for LTL$^{\downarrow}_{1}(\Sigma,\ttX,\ttU)$ languages.

Let $\Sigma=\{\sigma\}$ be a singleton alphabet.
For an integer $m \geq 1$,
the language $\cR^+_m$ consists of $\Sigma$-data words of the form
$$
{\sigma \choose a_0}{\sigma \choose a_1} \underbrace{\cdots}_{w_1}
{\sigma \choose a_1}{\sigma \choose a_2}
\cdots\cdots\cdots
{\sigma \choose a_{m-2}}{\sigma \choose a_{m-1}} \underbrace{\cdots}_{w_{m-1}}
{\sigma \choose a_{m-1}}{\sigma \choose a_m}
$$
where
\begin{itemize}
\item
for each $i = 0,1,\ldots,m-1$,
$a_i\neq a_{i+1}$;
\item
for each $i = 1,\ldots,m-1$,
$a_i \not\in \sfCont_{\scriptfD}(w_i)$.
\end{itemize}
The language $\cR^+$ is defined as
$$
\cR^{+}  =  \bigcup_{m=1,2,\ldots} \cR^{+}_m.
$$

\begin{theorem}
\label{t: hierarchy weak PA}
(See~\cite[Lemma~18]{Tan09Reach}.)
For each $k=1,2,\ldots$,
\begin{enumerate}
\item
$\cR_k \in \textrm{wPA}_k$ and $\cR_{k+1} \notin \textrm{wPA}_{k}$;
\item
$\textrm{wPA}_k \subsetneq \textrm{wPA}_{k+1}$.
\end{enumerate}
\end{theorem}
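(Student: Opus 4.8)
The two parts are linked: once part~1 is in hand, part~2 is immediate. Indeed every weak $k$-PA is trivially a weak $(k+1)$-PA that never uses its outermost pebble, so $\textrm{wPA}_k \subseteq \textrm{wPA}_{k+1}$, and instantiating part~1 at $k+1$ gives $\cR_{k+1} \in \textrm{wPA}_{k+1} \setminus \textrm{wPA}_k$, which witnesses strictness. The plan is therefore to establish separately (a) the membership $\cR_k \in \textrm{wPA}_k$ and (b) the non-membership $\cR_{k+1} \notin \textrm{wPA}_k$.

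For the membership (a) I would build a deterministic weak $k$-PA directly; by Theorem~\ref{t: equivalence} the choice of mode is irrelevant. The word shape of $\cR_k$ is a chain of $k$ links in which each anchor value $a_i$ brackets a block $w_i$ that must avoid $a_i$, with consecutive anchors differing. The idea is to let the pebbles follow the chain under the stack discipline: the head scans left to right, and when it reaches an anchor it places a fresh pebble there, so that while that pebble sits on $a_i$ the head can sweep the block $w_i$ and use the equality-test component $V$ of each transition to certify that no scanned value equals $a_i$; the adjacency condition $a_i \neq a_{i+1}$ is read off from $V$ when the next anchor is encountered. Because the nesting depth of the constraints is exactly $k$ and pebbles are placed in first-in-last-out order, $k$ pebbles suffice, and this is routine to formalize.

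The non-membership (b) is the heart of the theorem and the step I expect to be hardest. I would argue by contradiction, and the cleanest route seems to be an induction on $k$ that exploits the compositional structure highlighted in Remark~\ref{r: pebble numbering}: the run of a weak $k$-PA between the placement and the lifting of its outermost pebble is the run of a $(k-1)$-PA, so a separation at level $k-1$ can be leveraged at level $k$. Concretely, assuming a weak $k$-PA $\A$ accepted $\cR_{k+1}$, I would feed it a word $w \in \cR_{k+1}$ whose blocks $w_i$ are chosen very long and data-rich. Since $\A$ has finitely many states and only $k$ pebbles, a pigeonhole / Ramsey argument over its finite configuration space should force a ``blind spot'': at some point one of the $k+1$ forbidden regions is watched by no pebble and its forbidden value is not exposed through any equality test $V$ on a reachable configuration. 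Injecting a copy of the corresponding anchor $a_j$ into block $w_j$ then yields a word $w' \notin \cR_{k+1}$ that $\A$ cannot distinguish from $w$, whence $\A$ accepts $w'$ too, a contradiction.

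The real obstacle in (b) is that the equality-test set $V$ endows a $k$-PA with genuine cross-pebble comparison power, so ``$\A$ does not notice the injection'' is not automatic and must be earned: one has to track, along every reachable run, exactly which data values are visible (those under a pebble, or flagged by some $V$) and then exhibit a value that can be inserted so as to remain invisible throughout the whole computation. Making this precise, and threading it through the inductive reduction to $(k-1)$-PA so that indistinguishability genuinely lifts from subwords back to the entire word, is the delicate and technically demanding part; this is exactly what is carried out in \cite{Tan09Reach}, and I would follow that route.
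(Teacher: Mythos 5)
The paper offers no proof of this theorem at all: it is imported verbatim from \cite[Lemma~18]{Tan09Reach}, so there is no in-paper argument to compare yours against. Your outline is sound where it is concrete. The derivation of part~2 from part~1 (trivial inclusion $\textrm{wPA}_k \subseteq \textrm{wPA}_{k+1}$ plus the witness $\cR^+_{k+1}$) is exactly right, and the deterministic weak $k$-PA you describe for $\cR^+_k$ --- head sweeps a block while the most recently placed pebble sits on the anchor $a_i$, using the set $V$ to certify $a_i \notin \sfCont_{\scriptfD}(w_i)$ and $a_i \neq a_{i+1}$ --- is the expected construction and is routine to formalize. For the negative half, $\cR^+_{k+1} \notin \textrm{wPA}_k$, you correctly identify the crux (an indistinguishability argument: find a data value that can be injected into some block without being visible to any pebble or any equality test along any reachable run), but you do not actually carry that argument out; you explicitly defer it to \cite{Tan09Reach}, which is the same thing the paper does. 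That is a reasonable position given that the paper itself treats the result as external, but be aware that your proposal as written is not an independent proof of part~1's non-membership claim --- the pigeonhole/Ramsey step and the lifting of indistinguishability from subruns to the whole run are precisely where all the work lives.
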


\subsection{Linear temporal logic with one register freeze quantifier}
\label{ss: ltl}

In this section we recall the definition of Linear Temporal Logic (LTL)
with one register freeze quantifier~\cite{DemriL06}.
We consider only one-way temporal operators
``next'' $\ttX$ and ``until'' $\ttU$, and
do not consider their past time counterparts.

Let $\Sigma$ be a finite alphabet of labels.
Roughly, the logic LTL$_1^{\downarrow}(\Sigma,\ttX,\ttU)$
is standard LTL augmented with a register to store a data value.
Formally, the formulas are defined as follows.
\begin{itemize}
\item
Both $\sfTrue$ and $\sfFalse$ belong to
LTL$_1^{\downarrow}(\Sigma,\ttX,\ttU)$.
\item
The empty formula $\epsilon$ belongs to LTL$_1^{\downarrow}(\Sigma,\ttX,\ttU)$.
\item
For each $\sigma \in \Sigma$,
$\sigma$ is in LTL$_1^{\downarrow}(\Sigma,\ttX,\ttU)$.
\item
If $\varphi,\psi$ are in LTL$_1^{\downarrow}(\Sigma,\ttX,\ttU)$,
then so are $\neg \varphi$, $\varphi\vee\psi$ and $\varphi\wedge\psi$.
\item
$\uparrow$ is in LTL$_1^{\downarrow}(\Sigma,\ttX,\ttU)$.
\item
If $\varphi$ is in LTL$_1^{\downarrow}(\Sigma,\ttX,\ttU)$,
then so is $\ttX \varphi$.
\item
If $\varphi$ is in LTL$_1^{\downarrow}(\Sigma,\ttX,\ttU)$,
then so is $\downarrow \varphi$.
\item
If $\varphi,\psi$ are in LTL$_1^{\downarrow}(\Sigma,\ttX,\ttU)$,
then so is $\varphi\ttU\psi$.
\end{itemize}
Intuitively, the predicate $\uparrow$ is intended to mean that
the current data value is the same as the data value in the register,
while $\downarrow \varphi$ is intended to mean that
the formula $\varphi$ holds when the register contains the current data value.
This will be made precise in the definition of
the semantics of LTL$^{\downarrow}_1(\Sigma,\ttX,\ttU)$ below.

An occurrence of $\uparrow$ within the scope of
some freeze quantification $\downarrow$ is bounded by it;
otherwise, it is free.
A sentence is a formula with no free occurrence of $\uparrow$.

Next we define
the {\em freeze quantifier rank} of a sentence $\varphi$,
denoted by $\sffqr(\varphi)$.
\begin{itemize}
\item
For each $\sigma \in \Sigma$, $\sffqr(\sigma)=0$.
\item
$\sffqr(\sfTrue)=\sffqr(\sfFalse)=\sffqr(\uparrow)=0$.
\item
$\sffqr(\ttX\varphi)=\sffqr(\neg\varphi)=\sffqr(\varphi)$,
for every $\varphi$ in LTL$_1^{\downarrow}(\Sigma,\ttX,\ttU)$.
\item
$\sffqr(\varphi\vee\psi)=\sffqr(\varphi\wedge\psi)=\sffqr(\varphi\ttU\psi)=\max(\sffqr(\varphi),\sffqr(\psi))$,
for every $\varphi$ and $\psi$ in
LTL$_1^{\downarrow}(\Sigma,\ttX,\ttU)$.
\item
$\sffqr(\downarrow\varphi)=\sffqr(\varphi)+1$,
for every $\varphi$ in LTL$_1^{\downarrow}(\Sigma,\ttX,\ttU)$.
\end{itemize}

Finally,
we define the semantics of LTL$_1^{\downarrow}(\Sigma,\ttX,\ttU)$.
Let $w = {\sigma_1\choose a_1} \cdots {\sigma_n\choose a_n}$
be a $\Sigma$-data word.
For a position $i = 1,\ldots,n$, a data value $a$ and
a formula $\varphi$ in LTL$_1^{\downarrow}(\Sigma,\ttX,\ttU)$,
$w,i \models_{a} \varphi$
means that
$\varphi$ is satisfied by $w$ at position $i$
when the content of the register is $a$.
As usual,
$w,i \not\models_{a} \varphi$
means
$\varphi$ is not satisfied by $w$ at position $i$
when the content of the register is $a$.
The satisfaction relation is defined inductively as follows.

\begin{itemize}
\item
$w,i \models_a \epsilon$
for all $i=1,2,\ldots,n$ and $a\in\fD$.
\item
$w,i \models_a \sfTrue$
and $w,i \not\models_a \sfFalse$,
for all $i=1,2,3,\ldots$ and $a\in\fD$.
\item
$w,i \models_a\; \sigma$ if and only if
$\sigma_i =\sigma$.
\item
$w,i \models_a\; \varphi \vee \psi$ if and only if
$w,i \models_a\; \varphi$  or  $w,i \models_a \;\psi$.
\item
$w,i \models_a\; \varphi \wedge \psi$ if and only if
$w,i \models_a\; \varphi$ and  $w,i \models_a \;\psi$.
\item
$w,i \models_a\; \neg \varphi$  if and only if
$w,i \not\models_a\; \varphi$.
\item
$w,i \models_a\; \ttX \varphi$  if and only if
$1\leq i < |w|$  and
$w,i+1 \models_a\; \varphi$.
\item
$w,i \models_a\; \varphi\ttU\psi$ if and only if
there exists $j\geq i$ such that
\begin{itemize}
\item
$w,j \models_a\; \psi$ and
\item
$w,j' \models_a\; \varphi$, for all $j'=i,\ldots,j-1$.
\end{itemize}
\item
$w,i \models_a\; \downarrow\!\varphi$  if and only if
$w,i \models_{a_i}\; \varphi$
\item
$w,i \models_a\; \uparrow$  if and only if
$a = a_i$.
\end{itemize}
For a sentence $\varphi$ in LTL$_1^{\downarrow}(\Sigma,\ttX,\ttU)$,
we define the $\Sigma$-data language $L(\varphi)$ by
\begin{eqnarray*}
L(\varphi) & = &
\{w \mid w,1 \models_{a} \varphi \textrm{ for some }a \in \fD\}.
\end{eqnarray*}
Note that since $\varphi$ is a sentence,
all occurrences of $\uparrow$ in $\varphi$ are bounded.
Thus, it makes no difference which data value $a$
is used in the statement $w,1 \models_a \varphi$
of the definition of $L(\varphi)$.

\section{Decidability and undecidability of weak PA}
\label{s: decidability weak 2-pa}

In this section we will discuss the decidability issue of weak PA.
We show that the emptiness problem for weak 3-PA is undecidable,
while the same problem for weak 2-PA is decidable.
The proof of the decidability of the emptiness problem for weak 2-PA
will be the basis of the proof of the decidability of the same problem
for top view weak PA.

\begin{theorem}
\label{t: undecidable weak 3-pa}
The emptiness problem for weak $3$-PA is undecidable.
\end{theorem}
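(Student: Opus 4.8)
The plan is to reduce the halting problem for deterministic two-counter (Minsky) machines, which is undecidable, to the emptiness problem for weak $3$-PA. Given such a machine $M$, I would build a weak $3$-PA $\A_M$ over a suitable alphabet $\Sigma$ so that $L(\A_M)\neq\emptyset$ if and only if $M$ halts when started on empty counters. The automaton will accept exactly the $\Sigma$-data words that encode a valid halting computation of $M$; a nonemptiness witness is then precisely a halting run, which is what we cannot decide.

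For the encoding I would write a computation as a sequence of configuration blocks $\#\,C_0\,\#\,C_1\,\#\cdots\#\,C_t\,\#$, where the label at the start of each $C_j$ records $M$'s control state and the two counters are written in unary as strings of token positions (one segment per counter). Each token carries a data value, and the data values are used to link tokens across consecutive blocks: a surviving token keeps its data value and its relative left-to-right order in $C_{j+1}$, an $\ttInc$ appends one fresh token at the end of the relevant segment of $C_{j+1}$, a $\ttDec$ drops the last token of the relevant segment of $C_j$, and an $\ttIfz$ is recorded in the control label and is consistent exactly when that segment is empty. I also require that the data values occurring inside any single block be pairwise distinct, so that ``equal data value'' will unambiguously identify a single linked token.

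The automaton $\A_M$ then verifies two kinds of conditions, joined by a universal branch at the root. The finite-state conditions---that $C_0$ is initial, $C_t$ is halting, and that each block's control label together with the length change of its segments is consistent with $M$'s instruction at that state---are regular and are checked in the finite control during a single left-to-right pass. The data conditions are within-block distinctness, forward existence of a match (every token of $C_j$ reappears in $C_{j+1}$), and order-preservation of the link, and it is the last of these that genuinely needs three pebbles. To check order-preservation I would drive the gadget as follows: the two outer pebbles are parked on a pair of adjacent tokens $u,v$ of $C_j$ (place pebble~$3$, move it to $u$; place pebble~$2$ and move it to $v=u{+}1$), and the innermost pebble is then placed and scanned rightward---crossing exactly one $\#$, counted in the finite control---until it reaches the copy of $u$ in $C_{j+1}$, detected by $a_{\theta(1)}=a_{\theta(3)}$; one further step must then satisfy $a_{\theta(1)}=a_{\theta(2)}$, certifying that the successor of $u'$ carries the data value of $v$. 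Universally quantifying over all adjacent pairs $(u,v)$ and all block boundaries forces the link to respect the successor relation. Comparing the scanning head simultaneously against two frozen anchors $\theta(2)$ and $\theta(3)$ is exactly the capability that three weakly placed pebbles afford and two do not, which is consistent with the decidability of weak $2$-PA (Section~\ref{s: decidability weak 2-pa}).

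I expect the main obstacle to be proving that these conditions force the token-linking between consecutive configurations to be a genuine, order-preserving (near-)bijection, using only equality tests and the one-way, nested, first-in-last-out pebble discipline. The delicate point is ruling out encodings that under- or over-count a counter---for instance by mapping two distinct tokens of one block onto a single token of the adjacent block, by interleaving spurious tokens, or by scanning past the intended adjacent block---so one must argue that within-block distinctness, forward existence, successor-consistency, and the boundary cases (first and last tokens of a segment, and the designated added or removed token for $\ttInc$ and $\ttDec$) together pin the link down, and that all of these sub-checks fit within three pebbles. Once this correspondence between accepted words and halting runs is established, $L(\A_M)$ is nonempty precisely when $M$ halts, and undecidability of the emptiness problem for weak $3$-PA follows.
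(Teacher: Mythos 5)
Your reduction is sound, but it takes a different route from the paper's. The paper reduces from the Post Correspondence Problem: a candidate solution is written as two halves separated by a $\$$, every position in each half carries a distinct data value, and the equality of the two index sequences (and of the two label projections) is enforced by checking that the first data values agree, that every adjacent pair of data values on the left occurs as an adjacent pair on the right, and that the last data values agree. You reduce instead from Minsky-machine halting, stringing together configuration blocks and linking unary counter tokens across consecutive blocks. The crucial three-pebble gadget is the same in both proofs --- park two pebbles on an adjacent pair, then scan the innermost pebble forward comparing against both anchors, which is exactly what $V\subseteq\{2,3\}$ permits and what two pebbles cannot do. The ``main obstacle'' you flag (forcing the link to be an order-preserving bijection using only forward-existence checks, since one-way weak pebbles cannot test backward containment) is resolved by precisely the argument implicit in the paper's Criteria (1)--(3): within-block distinctness plus matching first tokens plus forward adjacent-pair preservation gives, by induction along the segment, that the image is an initial contiguous copy, and matching last tokens then pins down the length; the $\ttInc$/$\ttDec$ adjustments are handled at the final token. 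So your proof goes through, at the cost of more bookkeeping per block pair than the paper's single comparison across one $\$$ delimiter; what it buys is a self-contained reduction from a perhaps more familiar undecidable problem, and an encoding that more directly explains why three pebbles sit at the decidability frontier (two pebbles can only certify the ``incrementing'' approximation of the counters, which is the decidable case).
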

\begin{proof}
The proof is very similar to the proof of the undecidability of the
emptiness problem for weak 5-PA in~\cite{NevenSV04}.
We observe that the same proof can be easily adopted to weak 3-PA.
The details are provided below.
It uses a reduction from the Post Correspondence Problem (PCP),
which is well known to be undecidable~\cite{HopcroftU79}.
An {\em instance} of PCP is a sequence of pairs $(x_1,y_1),\ldots,(x_n,y_n)$,
where each $x_1,y_1,\ldots,x_n,y_n \in \{\alpha,\beta\}^*$.

This instance has a {\em solution} if
there exist indexes $i_1,\ldots,i_m \in \{1,\ldots,n\}$
such that $x_{i_1}\cdots x_{i_m} = y_{i_1}\cdots y_{i_m}$.
The PCP asks whether a given instance of the problem has a solution.

In the following we show how to encode a solution of an instance of PCP
into a data word which possesses properties that can be checked
by a weak 3-PA.
Let $\Sigma = \{1,\ldots,n,\alpha,\beta,\$\}$.
We denote by $x_i = \nu_{i,1}\cdots\nu_{i,l_i}$,
for each $i=1,\ldots,n$.
Each string $x_i$ is encoded as
$\sfEnc(x_i) = {\nu_{i,1}\choose a_{i,1}}\cdots {\nu_{i,l_i}\choose a_{i,l_i}}$
where $a_{i,1},\ldots,a_{i,l_i}$ are pairwise different.

The string $x_{i_1}x_{i_2}\cdots x_{i_m}$ can be encoded as
\begin{eqnarray*}
\sfEnc(x_{i_1},x_{i_2},\ldots,x_{i_m})
& = &
{i_1 \choose b_1} \sfEnc(x_{i_1})
{i_2 \choose b_2} \sfEnc(x_{i_2})
\cdots
{i_m \choose b_m} \sfEnc(x_{i_m})
\end{eqnarray*}
where all the data values that appear in it are pairwise different.
Note that even if $i_j=i_{j'}$ for some $j,j'$,
the data values that appear in $\sfEnc(x_{i_j})$
do not appear in $\sfEnc(x_{i_{j'}})$ and vice versa.
The idea is each data value is used to mark a place in the string.

Similarly, the string $y_{j_1}y_{j_2}\cdots y_{j_l}$ can be encoded as
\begin{eqnarray*}
\sfEnc(y_{j_1},y_{j_2},\ldots,y_{j_l})
& = &
{j_1 \choose c_1} \sfEnc(y_{j_1})
{j_2 \choose c_2} \sfEnc(y_{j_2})
\cdots
{j_l \choose c_l} \sfEnc(y_{j_l})
\end{eqnarray*}
where the data values that appear in it are pairwise different.

Now the data word
$$
{i_1 \choose b_1} \sfEnc(x_{i_1})
\cdots
{i_m \choose b_m} \sfEnc(x_{i_m})
{\$ \choose d}
{j_1 \choose c_1} \sfEnc(y_{j_1})
\cdots
{j_l \choose c_l} \sfEnc(y_{j_l})
$$
constitutes a solution to the instance of PCP if and only if
\begin{eqnarray}
\label{eq: pcp 1}
i_1 i_2 \cdots i_m & = & j_1 j_2 \cdots j_l
\\
\label{eq: pcp 2}
\sfProj_{\Sigma}(\sfEnc(x_{i_1})\cdots \sfEnc(x_{i_m}))
& = &
\sfProj_{\Sigma}(\sfEnc(y_{j_1})\cdots \sfEnc(y_{j_l}))
\end{eqnarray}
Now, in order to able to check such property
with weak 3-PA,
we demand the following additional criteria.
\begin{enumerate}
\item
$b_1\cdots b_m = c_1\cdots c_l$;
\item
$\sfProj_{\scriptfD}(\sfEnc(x_{i_1})\cdots \sfEnc(x_{i_m}))
 =
\sfProj_{\scriptfD}(\sfEnc(y_{j_1})\cdots \sfEnc(y_{j_l}))$
\item
For any two positions $h_1$ and $h_2$
where $h_1$ is to the left of the delimiter ${\$ \choose c}$
and $h_2$ is to the right of the delimiter ${\$ \choose c}$,
if both of them have the same data value,
then both of them are labelled with the same label.
\end{enumerate}
All the Criterias~(1)--(3) imply Equations~\ref{eq: pcp 1} and~\ref{eq: pcp 2}.

Because the data values that appears in
$\sfProj_{\scriptfD}(\sfEnc(x_{i_1}),\ldots,\sfEnc(x_{i_m}))$
are pairwise different,
all of them are checkable by three pebbles in the ``weak'' manner.
For example, to check Criteria~(1),
the automaton does the following.
\begin{itemize}
\item
Check that $b_1=c_1$.
\item
Check that for each $i=1,\ldots,m-1$,
there exists $j$ such that
$a_{i}a_{i+1}=b_{j}b_{j+1}$.
\\
It can be done by placing pebble~$3$ to read $a_i$ and pebble~$2$ to read $a_{i+1}$,
then using pebble~$3$ to search on the other side of $\$$ for
the index $j$.
\item
Finally, check that $b_m=c_l$.
\end{itemize}
Criteria~(2) can be checked similarly and
Criteria~(3) is straightforward.
The reduction is now complete and
we prove that the emptiness problem for weak $3$-PA is undecidable.
\end{proof}

Now we are going to show that
the emptiness problem for weak 2-PA is decidable.
The proof is by simulating weak $2$-PA by
one-way alternating one register automata (1-RA).
In fact, the simulation can be easily generalized to arbitrary number of pebbles.
That is, weak $k$-PA can be simulated by one-way alternating $(k-1)$-RA.
This result settles a question left open in~\cite{NevenSV04}:
Can weak PA be simulated by alternating RA?
We refer the reader to Appendix~\ref{app: s: simulation weak k-pa}
for the details of the proof.

\begin{theorem}
\label{t: simulation weak 2-pa}
For every weak $2$-PA $\A$,
there exists a one-way alternating $1$-RA $\A'$
such that $L(\A)=L(\A')$.
Moreover, the construction of $\A'$ from $\A$ is effective.
\end{theorem}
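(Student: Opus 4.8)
\ The plan is to construct $\A'$ directly, simulating the two-level structure of a weak $2$-PA. The key observation is that while pebble~$1$ is on the word, pebble~$2$ stays frozen at the position where pebble~$1$ was placed, and the \emph{only} equality test available to $\A$ is whether the value under pebble~$1$ equals the value under pebble~$2$ (the guard $V\in\{\emptyset,\{2\}\}$); this single comparison is exactly what one register records. Accordingly $\A'$ will keep the data value of pebble~$2$ in its register throughout each pebble-$1$ subcomputation, and will use its finite control to track the current state of $\A$ together with the current pebble level. The initial state is $(2,q_0)$ and $\A'$ starts on the left-end marker, matching the initial configuration $[2,q_0,\theta_0]$ of $\A$.

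I would use two families of control states: states $(2,q)$, meaning ``$\A$ is being simulated at pebble level~$2$ in state~$q$'', and states $(1,q,R)$ with $R\subseteq Q$, meaning ``$\A$ is being simulated at pebble level~$1$ in state~$q$, under the convention that a lift returning pebble~$2$ in state~$r$ is admissible exactly when $r\in R$''. At level~$2$ the guard $V$ is always $\emptyset$, so the $\ttStay$ and $\ttRight$ transitions of $\A$ become the corresponding stay/right moves of $\A'$, with $\A'$ branching universally when $q\in U$ and existentially when $q\notin F\cup U$ (a $\ttLift$ at level~$2$ is undefined in $\A$, since it would leave the pebble range). At level~$1$ the transitions mirror those of $\A$ for $i=1$, the only difference being that the guard $V$ is computed by testing the current data value against the register: $V=\{2\}$ if they are equal and $V=\emptyset$ otherwise. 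Final states of $\A$ are accepting in $\A'$ at either level, and the register is reloaded at every $\ttPlace$, so its value is immaterial during level-$2$ simulation.

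The heart of the construction, and the main obstacle, is the $\ttPlace$ transition: here pebble~$1$ runs off to the right and must eventually $\ttLift$ back to pebble~$2$ at the original position, which a one-way automaton cannot revisit. I would resolve this by a guess-and-verify step over the set of admissible return states. When $\A$ at level~$2$ in state~$q$ performs a $\ttPlace$ into pebble-$1$ state $q_1$ at the current position, $\A'$ (i) freezes the current data value into its register, (ii) existentially guesses a set $R\subseteq Q$, and (iii) branches universally into one copy $(2,r)$ for each $r\in R$, each continuing the level-$2$ simulation from the current position, together with a single copy $(1,q_1,R)$ running the pebble-$1$ subcomputation. A $\ttLift$ to state~$r$ encountered at level~$1$ is then a leaf that accepts iff $r\in R$. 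The copies $(2,r)$ thus discharge precisely the acceptance obligations that the admissible lift-leaves assert, so no physical return to pebble~$2$'s position is ever needed.

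Correctness follows from two equivalences proved simultaneously by induction on the (well-founded) leads-to-acceptance relation of $\A$: that $[2,q,\theta]$ with $\theta(2)=p$ leads to acceptance iff $\A'$ accepts from $(2,q)$ at position~$p$, and that $[1,q,\theta]$ with $\theta(2)=p$ leads to acceptance iff $\A'$ accepts from $(1,q,R^\ast)$ at position~$\theta(1)$ with register value $a_p$, where $R^\ast$ is the set of states $r$ for which pebble~$2$ at position~$p$ in state~$r$ leads to acceptance. For completeness one guesses $R=R^\ast$ exactly, so that the convention ``lift to $r$ is good iff $r\in R$'' coincides with the true acceptance status of the returning pebble-$2$ configuration. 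For soundness one observes that the universal copies $(2,r)$ force every guessed $R$ to satisfy $R\subseteq R^\ast$, so the convention is a pessimistic under-approximation and any level-$1$ subcomputation that succeeds under it also succeeds under the true relation. I expect this soundness direction---reconciling the alternation introduced by $\A'$ with the native alternation of $\A$ across the place/lift boundary---to be the delicate point. The remaining details (the treatment of the two end markers, and, should the $1$-RA model lack native stay moves, compiling away the finitely many consecutive stay steps at a fixed position) are routine. Finally, the same idea nests: applying it at each pebble level shows that a weak $k$-PA is simulated by a one-way alternating $(k-1)$-RA, with one register devoted to each outer pebble.
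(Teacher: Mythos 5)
Your proposal is correct and is essentially the paper's proof: the same Guess--Split--Verify decomposition of each pebble-1 excursion, with the single register holding pebble~2's data value and a conjunctive split into a continuation branch and a verification branch that replays pebble~1's run. The only real difference is that the paper first invokes Theorem~\ref{t: equivalence} to assume $\A$ is deterministic, so it suffices to guess a single return state $p_i$ and accept the verification branch exactly in $(p_i,p_i)$, whereas you keep $\A$ alternating and guess a \emph{set} $R$ of admissible return states, spawning a universal copy $(2,r)$ for each $r\in R$ --- a sound variant (by the monotonicity of the level-1 computation in $R$, as you note) that avoids any appeal to the determinization result.
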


Now, by Theorem~\ref{t: simulation weak 2-pa},
we immediately obtain the decidability of weak 2-PA
because the emptiness problem for
one-way alternating $1$-RA is decidable~\cite[Theorem~4.4]{DemriL06}.

\begin{corollary}
\label{c: decidable weak 2-pa}
The emptiness problem for weak $2$-PA is decidable.
\end{corollary}

We devote the rest of this section to
the proof of Theorem~\ref{t: simulation weak 2-pa}.

Let $\A = \langle Q,q_0,\mu,F \rangle$ be a weak $2$-PA.
We assume that $\A$ is deterministic.
Furthermore, we normalize the behavior of$\A$ as follows.
\begin{itemize}
\item
Pebble~1 is lifted only after
it reads the right-end marker symbol $\triangleright$.
\item
Only pebble~2 can enter a final state and
it does so after it reads the right-end marker $\triangleright$.
\item
Immediately after pebble~2 moves right,
pebble~1 is placed.
\item
Immediately after pebble~1 is lifted,
pebble~2 moves right.
\end{itemize}

On input word $w={\sigma_1\choose d_1}\cdots {\sigma_n \choose d_n}$,
the run of $\A$ on $\triangleleft w \triangleright$
can be depicted as a tree shown in Figure~\ref{f: tree}.

\begin{figure}[ht]

\begin{picture}(350,300)(-130,-225)

{\footnotesize

\put(-70,-210){\circle*{3}}
\put(-83,-209){$q_0$}
\put(-70,-207){\vector(0,1){44}}
\put(-79,-188){$\triangleleft$}

\put(-70,-160){\circle*{3}}
\put(-83,-159){$q_1$}
\put(-70,-157){\vector(0,1){44}}
\put(-87,-135){${\sigma_1 \choose d_1}$}

\put(-70,-110){\circle*{3}}
\put(-83,-109){$q_2$}


\multiput(-70,-106)(0,3){20}{\circle*{1}}

\put(-70,-45){\circle*{3}}
\put(-83,-44){$q_{n}$}
\put(-70,-42){\vector(0,1){44}}
\put(-88,-20){${\sigma_n \choose d_n}$}

\put(-70,5){\circle*{3}}
\put(-93,6){$q_{n+1}$}
\put(-70,8){\vector(0,1){44}}
\put(-78,28){$\triangleright$}

\put(-70,55){\circle*{3}}
\put(-83,56){$q_f$}


\put(-67,-164){$p_{1,1}$}

\put(-67,-158.5){\vector(4,3){34}}
\put(-49,-152){${\sigma_1 \choose d_1}$}

\put(-30,-130){\circle*{3}}
\put(-27,-134){$p_{1,2}$}

\put(-27,-128.5){\vector(4,3){34}}
\put(-9,-122){${\sigma_2 \choose d_2}$}

\put(10,-100){\circle*{3}}
\put(13,-104){$p_{1,3}$}

\put(13,-98.5){\vector(4,3){34}}
\put(31,-92){${\sigma_3 \choose d_3}$}

\put(50,-70){\circle*{3}}
\put(53,-74){$p_{1,4}$}

\multiput(54,-67)(3,2.25){17}{\circle*{1}}

\put(106,-28){\circle*{3}}
\put(109,-36){$p_{1,n}$}

\put(109,-26.5){\vector(4,3){34}}
\put(127,-20){${\sigma_n \choose d_n}$}

\put(146,2){\circle*{3}}
\put(149,-2){$p_{1,n+1}$}

\put(149,3.5){\vector(4,3){34}}
\put(170,12){$\triangleright$}

\put(186,32){\circle*{3}}
\put(189,28){$p_{1}$}


\put(-67,-114){$p_{2,2}$}

\put(-67,-108.5){\vector(4,3){34}}
\put(-49,-102){${\sigma_2 \choose d_2}$}

\put(-30,-80){\circle*{3}}
\put(-27,-84){$p_{2,3}$}

\put(-27,-78.5){\vector(4,3){34}}
\put(-9,-72){${\sigma_3 \choose d_3}$}

\put(10,-50){\circle*{3}}
\put(13,-54){$p_{2,4}$}

\multiput(14,-47)(3,2.25){17}{\circle*{1}}

\put(66,-8){\circle*{3}}
\put(67,-12){$p_{2,n}$}

\put(68,-6.5){\vector(4,3){35}}
\put(86,0){${\sigma_n \choose d_n}$}

\put(106,22){\circle*{3}}
\put(107,18){$p_{2,n+1}$}

\put(108,23.5){\vector(4,3){35}}
\put(129,32){$\triangleright$}

\put(146,52){\circle*{3}}
\put(148,48){$p_{2}$}












\put(-70,-45){\circle*{3}}
\put(-67,-49){$p_{n,n}$}

\put(-67,-43.5){\vector(4,3){35}}
\put(-49,-37){${\sigma_n \choose d_n}$}

\put(-30,-15){\circle*{3}}
\put(-27,-19){$p_{n,n+1}$}

\put(-27,-13.5){\vector(4,3){35}}
\put(-6,-5){$\triangleright$}

\put(10,15){\circle*{3}}
\put(13,11){$p_{n}$}

}
\end{picture}
\label{f: tree}
\caption{The tree representation of a run of $\A$ on
$w={\sigma_1\choose d_1}\cdots {\sigma_n\choose d_n}$.}
\end{figure}

The meaning of the tree is as follows.
\begin{itemize}
\item
$q_0,q_1,\ldots,q_n,q_{n+1}$ are the states of $\A$
when pebble~2 is the head pebble reading
the positions $0,1,\ldots,n,n+1$, respectively,
that is, the symbols
$\triangleleft, {\sigma_1\choose d_1},\ldots,{\sigma_n\choose d_n},\triangleright$, respectively.
\item
$q_f$ is the state of $\A$ after
pebble~2 reads the symbol $\triangleright$.
\item
For $1\leq i \leq j \leq n$, $p_{i,j}$ is the state of $\A$
when pebble~1 is the head pebble above the position $j$
while pebble~2 is above the position $i$.
\item
For $1 \leq i \leq n$,
the state $p_i$ is the state of $\A$ immediately after
pebble~1 is lifted and pebble~2 is above the position $i$.
\\
It must be noted that there is a transition
$(2,\sigma_i,\emptyset,p_i)\to (q_{i+1},\ttRight)$
applied by $\A$ that is not depicted in the figure.
\end{itemize}

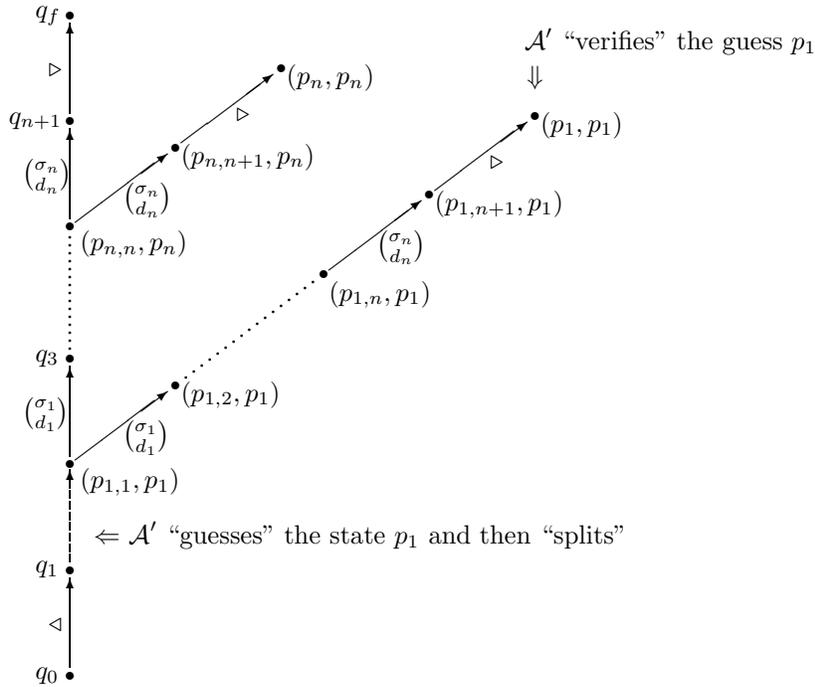
\begin{figure}[ht]

\begin{picture}(350,300)(-130,-210)

{\footnotesize

\put(-70,55){\circle*{3}}
\put(-83,54){$q_f$}

\put(-78,32){$\triangleright$}
\put(-70,18){\vector(0,1){34}}
\put(-70,15){\circle*{3}}
\put(-93,14){$q_{n+1}$}


\put(-50,-17){${\sigma_n\choose d_n}$}
\put(-68,-23.5){\vector(4,3){35}}

\put(-30,5){\circle*{3}}
\put(-28,-1){$(p_{n,n+1},p_n)$}
\put(-7,15){$\triangleright$}
\put(-28,6.5){\vector(4,3){35}}

\put(10,35){\circle*{3}}
\put(12,29){$(p_{n},p_n)$}


\put(-88,-8){${\sigma_n\choose d_n}$}
\put(-70,-22){\vector(0,1){34}}
\put(-70,-25){\circle*{3}}
\put(-66,-34){$(p_{n,n},p_n)$}

\multiput(-70,-71)(0,3){15}{\circle*{1}}

\put(-70,-75){\circle*{3}}
\put(-83,-76){$q_{3}$}












\put(-50,-107){${\sigma_1\choose d_1}$}
\put(-68,-113.5){\vector(4,3){35}}



\put(-28,-91){$(p_{1,2},p_{1})$}
\put(-30,-85){\circle*{3}}
\multiput(-26,-82)(3,2.25){17}{\circle*{1}}

\put(28,-53){$(p_{1,n},p_{1})$}
\put(26,-43){\circle*{3}}
\put(46,-35){${\sigma_n\choose d_n}$}
\put(28,-41.5){\vector(4,3){35}}

\put(68,-19){$(p_{1,n+1},p_{1})$}
\put(66,-13){\circle*{3}}
\put(89,-3){$\triangleright$}
\put(68,-11.5){\vector(4,3){35}}

\put(108,11){$(p_{1},p_{1})$}
\put(106,17){\circle*{3}}

\put(103,29){$\Downarrow$}
\put(102,42){$\A'$ ``verifies'' the guess $p_1$}

\put(-88,-98){${\sigma_1\choose d_1}$}
\put(-70,-112){\vector(0,1){34}}
\put(-70,-115){\circle*{3}}
\put(-66,-124){$(p_{1,1},p_1)$}

\put(-70,-118){\vector(0,1){1}}
\multiput(-70,-152)(0,4){9}{\line(0,1){3}}
\put(-61,-145){$\Leftarrow$ $\A'$ ``guesses'' the state $p_1$ and then ``splits''}
\put(-70,-155){\circle*{3}}
\put(-83,-156){$q_{1}$}

\put(-78,-178){$\triangleleft$}
\put(-70,-192){\vector(0,1){34}}
\put(-70,-195){\circle*{3}}
\put(-83,-196){$q_0$}

}

\end{picture}
\label{f: tree run alternating FMA}
\caption{The corresponding run of $\A'$
to the one in Figure~\ref{f: tree}.}
\end{figure}

Now the simulation of $\A$ by a one-way alternating 1-RA $\A'$
becomes straightforward by transforming the tree in Figure~\ref{f: tree}
into a tree depicting the computation of $\A'$
on the same word $w$.

Roughly, the automaton $\A'$ is defined as follows.
\begin{itemize}
\item
The states of $\A'$ are elements of $Q \cup (Q\times Q)$\footnote{Actually $\A'$ needs some other
auxiliary states. However, for the intuitive explanation here the set $Q \cup (Q\times Q)$ suffices.
We refer the reader to Appendix~\ref{app: s: simulation weak k-pa} for the details.};
\item
the initial state is $q_0$; and
\item
the set of final states is $F \cup \{(p,p):p \in Q\}$.
\end{itemize}
For each placement of pebble~$1$ on position $i$,
the automaton performs the following ``Guess--Split--Verify'' procedure
which consists of the following steps.
\begin{enumerate}
\item
From the state $q_i$, $\A'$ ``guesses'' the state
in which pebble~1 is eventually lifted, i.e. the state $p_i$,
and stores it in its internal state.
\\
That is, $\A'$ enters into the state $(q_i,p_i)$.
\item
$\A'$ ``splits'' its computation (conjunctively) into two branches.
\begin{itemize}
\item
In one branch, assuming that the guess $p_i$ is correct,
$\A'$ moves right and enters into the state $q_{i+1}$,
simulating the transition $(2,\emptyset,p_i)\to(q_{i+1},\ttRight)$.
After this, it recursively performs the Guess--Split--Verify procedure
for the next placement of pebble~1 on position~$(i+1)$.
\item
In the other branch $\A'$ stores the data value $d_i$ in its register
and simulates the run of pebble~1 on ${\sigma_i \choose d_i}\cdots {\sigma_n\choose d_n}$
to ``verify'' that the guess $p_i$ is correct.
\\
That is, $\A'$ accepts only if it ends in the state $(p_i,p_i)$.
\end{itemize}
\end{enumerate}
Figure~\ref{f: tree run alternating FMA} shows
the corresponding run of $\A'$ on the same word.

\section{Complexity of weak 2-PA}
\label{s: complexity weak pa}

In this subsection we are going to determine the
time complexity of three specific problems related to weak 2-PA.
\begin{description}
\item[Emptiness problem.]
The emptiness problem for weak $2$-PA.
That is, given a weak 2-PA $\A$, is $L(\A)=\emptyset$?
\item[Labelling problem.]
Given a weak $2$-PA $\A$ over the labels $\Sigma$
and a sequence of data values $d_1\cdots d_n\in \fD^n$,
is there a sequence of labels $\sigma_1\cdots\sigma_n \in \Sigma^n$
such that ${\sigma_1 \choose d_1}\cdots{\sigma_n \choose d_n} \in L(\A)$?
\item[Data value membership problem.]
Given a weak $2$-PA $\A$ over the labels $\Sigma$
and a sequence of finite labels $\sigma_1\cdots \sigma_n \in \Sigma^n$,
is there a sequence of data values $d_1\cdots d_n \in \fD^n$
such that ${\sigma_1 \choose d_1}\cdots{\sigma_n \choose d_n} \in L(\A)$?
\end{description}
The emptiness problem, as we have seen in the previous section, is decidable.
The labelling and data value membership problem are definitely decidable.
To solve the labelling problem,
one simply iterates all possible sequence $\sigma_1\cdots\sigma_n \in \Sigma^n$
and runs $\A$ to check whether ${\sigma_1 \choose d_1}\cdots{\sigma_n\choose d_n} \in L(\A)$.
Such straightforward algorithm requires $O(|\Sigma|^n\cdot n^2)$ computational steps.
Similarly, to solve the data value membership problem,
one can iterate all possible sequence of data values $d_1\cdots d_n$
and run $\A$ to check whether ${\sigma_1 \choose d_1}\cdots{\sigma_n\choose d_n} \in L(\A)$.
Since the word is of length $n$,
one simply needs to consider up to $n$ different data values.
Such algorithm takes $O(n^n\cdot n^2)$ computational steps.

We are going to show that
the emptiness problem is not primitive recursive,
while both the labelling and data value membership problems
are NP-complete.

We start the proof with a few simple examples
of languages accepted by weak 2-PA.
Though simple, they are very crucial
in determining the complexity of the emptiness problem for weak 2-PA.

\begin{example}
\label{e: incremental weak 2-pa}
Let $\Sigma = \{\alpha,\beta\}$.
We define the $\Sigma$-data language $L_{inc}$
which consists of the data words of the following form:
$$
\underbrace{{\alpha \choose a_1} \cdots {\alpha \choose a_m}}_{w_1}
\underbrace{{\beta \choose b_1} \cdots {\beta \choose b_n}}_{w_2},
$$
where
\begin{itemize}
\item
the data values $a_1,\ldots,a_m$ are pairwise different;
\item
the data values $b_1,\ldots,b_n$ are pairwise different;
\item
$\sfProj_{\Sigma}(w_1)=\alpha^m$;
\item
$\sfProj_{\Sigma}(w_2)=\beta^n$;
\item
$\sfCont_{\scriptfD}(w_1) \subseteq \sfCont_{\scriptfD}(w_2)$.
\end{itemize}
All these conditions can be checked by weak 2-PA.
The intention of data words in $L_{inc}$ is to represent the inequality $m \leq n$.
\end{example}

\begin{example}
\label{e: incremental +l weak 2-pa}
Let $\Sigma = \{\alpha,\beta\}$.
For a fixed $l \geq 0$,
we define the language $L_{inc,+1}$
which consists of the data words of the following form:
$$
\underbrace{{\alpha \choose a_1} \cdots {\alpha \choose a_m}}_{w_1}
\underbrace{{\beta \choose b_1} \cdots {\beta \choose b_n}}_{w_2}
$$
where
\begin{itemize}
\item
the data values $a_1,\ldots,a_m$ are pairwise different;
\item
the data values $b_1,\ldots,b_n$ are pairwise different;
\item
$\sfProj_{\Sigma}(w_1)=\alpha^m$;
\item
$\sfProj_{\Sigma}(w_2)=\beta^{n}$;
\item
For each $a_i \in \sfCont_{\scriptfD}(w_1)$,
$a_i \neq b_1$.
\item
$\{a_1,\ldots,a_m\} \subseteq \{b_2,\ldots,b_n\}$.
\end{itemize}
Again, all these conditions can be checked by weak 2-PA.
The intention of data words in $L_{inc,+1}$ is to represent the inequality $m+1 \leq n$.
\end{example}

\begin{example}
\label{e: incremental -l weak 2-pa}
Let $\Sigma = \{\alpha,\beta\}$.
For a fixed $l \geq 0$,
we define the language $L_{inc,-1}$
which consists of the data words of the following form:
$$
\underbrace{{\alpha \choose a_1} \cdots {\alpha \choose a_{m}}}_{w_1}
\underbrace{{\beta \choose b_1} \cdots {\beta \choose b_n}}_{w_2}
$$
where
\begin{itemize}
\item
the data values $a_1,\ldots,a_m$ are pairwise different;
\item
the data values $b_1,\ldots,b_n$ are pairwise different;
\item
$\sfProj_{\Sigma}(w_1)=\alpha^m$;
\item
$\sfProj_{\Sigma}(w_2)=\beta^n$;
\item
The symbol $a_1 \notin \{b_{1},\ldots,b_n\}$;
\item
For each $i=2,\ldots,m$,
$a_i \in \{b_{1},\ldots,b_n\}$.
\end{itemize}
Again, all these conditions can be checked by weak 2-PA.
The intention of data words in $L_{inc,-1}$ is to represent the inequality $m-1 \leq n$.
\end{example}

\begin{theorem}
\label{t: compexity weak 2-pa}
The emptiness problem for weak $2$-PA is not primitive recursive.
\end{theorem}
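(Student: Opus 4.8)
The plan is to reduce a non-primitive-recursive problem to the \emph{non-emptiness} problem for weak 2-PA, using the three gadget languages of Examples~\ref{e: incremental weak 2-pa}--\ref{e: incremental -l weak 2-pa} as the basic counter-manipulating building blocks. The source problem I would use is the control-state reachability problem for \emph{incrementing counter automata}, i.e.\ counter machines whose counters are subject only to increasing (gainy) errors; this problem is decidable but not primitive recursive~\cite{DemriL06}. The crucial observation is that the three Examples let a weak 2-PA test exactly the one-sided inequalities $v' \geq v$, $v' \geq v+1$, and $v' \geq v-1$ between the sizes of two adjacent data-blocks. These are precisely the gainy semantics of a no-op, an increment, and a decrement: the successor value is forced to be at least the nominal value but may be larger because of an error. (Decidability itself is already guaranteed by Corollary~\ref{c: decidable weak 2-pa}, so only the lower bound is at issue, and note that the inclusion of Theorem~\ref{t: simulation weak 2-pa} gives only the wrong direction for a hardness proof; hence the need for a direct reduction \emph{into} weak 2-PA.)

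Concretely, given an incrementing counter automaton $M$ with a fixed number of counters, I would encode a run $(q_0,\vec 0),(q_1,\vec v_1),\ldots,(q_t,\vec v_t)$ of $M$ as a data word split into consecutive blocks, one per configuration, separated by delimiters. Each block carries its control state as a label on a header position, and for each counter $j$ it carries $v^{(j)}$ positions bearing a dedicated label $\gamma_j$ and pairwise-distinct data values. The within-block distinctness of data values, together with the regular discipline on labels (the sequence of control states and the transition applied between two consecutive blocks must be consistent with the transition relation of $M$, the first block must encode $(q_0,\vec 0)$, and the last block must carry the target state), are all local checks that the finite control performs directly. The counter updates are enforced blockwise by running the pebble-1 scan of Examples~\ref{e: incremental weak 2-pa}--\ref{e: incremental -l weak 2-pa}, restricted by label to the $\gamma_j$-positions and to matches lying in the \emph{immediately following} block; the automaton keeps in its state how many delimiters pebble~1 has crossed, so that spurious $\sfCont_{\scriptfD}$-matches between non-adjacent blocks are rejected. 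A single left-to-right sweep of pebble~2 suffices to verify all counters simultaneously, since for each counter-position it need only locate one matching position of the same label and value in the next block.

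It would then remain to check both directions of correctness: every run of $M$ yields an accepting data word (choose the data values freshly, and realize each gain by padding the next block), and conversely every word accepted by the constructed automaton $\mathcal{A}_M$ decodes to a legal run of $M$, because the adjacent-block containment tests enforce exactly the lower bounds $v'\geq v$, $v'\geq v+1$, $v'\geq v-1$ and nothing stronger, while zero-tests are read off as emptiness of the relevant $\gamma_j$-sub-block. Hence $L(\mathcal{A}_M)\neq\emptyset$ if and only if $M$ reaches its target state, so a primitive-recursive emptiness test for weak 2-PA would give a primitive-recursive decision procedure for incrementing counter automaton reachability, a contradiction. The main obstacle I anticipate is the ``alignment'' part of the construction: guaranteeing, with only two one-way pebbles, that each content comparison really is carried out between the intended adjacent blocks rather than across a longer gap, and confirming that the faulty semantics so obtained is genuinely the \emph{incrementing} one whose reachability problem is non-primitive recursive---that is, matching the counter-machine model to the single direction in which a forward pebble scan can test $\sfCont_{\scriptfD}$-containment.
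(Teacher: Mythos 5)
Your proposal is correct and follows essentially the same route as the paper: a reduction from the (non-primitive-recursive) reachability/emptiness problem for incrementing counter automata, encoding each configuration as a labelled block of pairwise-distinct data values and using the containment tests of Examples~\ref{e: incremental weak 2-pa}--\ref{e: incremental -l weak 2-pa} to enforce the gainy counter updates between adjacent blocks. The paper's proof is only a sketch of exactly this construction, so your additional care about block alignment via delimiter counting is a welcome elaboration rather than a divergence.
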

\begin{proof}
The proof is by simulation of incrementing counter automata.
It follows closely the proof of similar lower bound for
one-way alternating 1-RA~\cite[Theorem~2.9]{DemriL06}.
It is known that the emptiness problem for incrementing counter automata
is decidable~\cite[Theorem~6]{Mayr03}, but not primitive recursive~\cite{Schnoebelen02}.
We refer the reader to Appendix~\ref{app: s: counter automata} for
the formal definition of incrementing counter automata.

In short, an incrementing $l$-counter automaton over $\Sigma$ is
an automaton with $l$ counters, operates on words over $\Sigma$,
and the value in each counter is allowed to erroneously increase,
hence, the name {\em incrementing}.

A configuration is a tuple $(q,\sigma,\bv)$
where $q$ is a state, $\sigma$ is the current symbol read
and $\bv:\{1,\ldots,l\}\to\bbN$,
where $\bv(i)$ denotes the value stored in counter $i$.

Now a configuration $(q,\bv)$ can be encoded as a
$(Q \cup \Sigma \cup \{c_1,\ldots,c_l\})$-data word as follows.
$$
{q\choose d_1}{\sigma\choose d_2}
{c_1 \choose a_{1,1}}\cdots{c_1\choose a_{1,\bv(1)}}
\cdots
{c_l \choose a_{l,1}}\cdots{c_l\choose a_{l,\bv(l)}}.
$$
where the symbols $a_{1,1},\ldots,a_{l,\bv(l)}$ are pairwise different.
The labels $c_1,\ldots,c_l$ are used as pointers
that the current data value is part of the encoding of
the counters $\bv(1),\ldots,\bv(l)$, respectively.

Since the automaton allows for erroneous increment of values in each counter,
we can check the validity of the application of each transition,
like in Examples~\ref{e: incremental weak 2-pa},~\ref{e: incremental +l weak 2-pa}
and~\ref{e: incremental -l weak 2-pa}.
\end{proof}

Now we are going to show the NP-completeness of
the labelling problem.
It is by a reduction from graph 3-colorability problem.

Given an undirected graph $G=(V,E)$,
let $V = \{1,\ldots,n\}$ and $E = \{(i_1,j_1),\ldots,(i_m,j_m)\}$.
We can take $i_1j_1 \cdots i_mj_m$ as the sequence of data values.
Then, we construct a weak $2$-PA $\A$ over the alphabet $\Sigma = \{\vartheta_R,\vartheta_G,\vartheta_B\}$
that accepts data words of even length in which the following hold.
\begin{itemize}
\item
For all odd position $x$,
the label on position $x$ is different from
the label on position $x+1$.
\item
For every two positions $x$ and $y$,
if they have the same data value,
then they have the same label.
\end{itemize}
Thus, the graph $G$ is 3-colorable if and only if
there exists $\sigma_1\cdots\sigma_{2m} \in \{\vartheta_R,\vartheta_G,\vartheta_B\}^*$ such that
$$
{\sigma_1\choose i_1} {\sigma_2\choose j_1}\cdots
{\sigma_{2m-1}\choose i_m} {\sigma_{2m}\choose j_m}
\in L(\A),
$$
and the NP-completeness of the labeling problem follows.

The NP-completeness of data value membership problem can established in a similar spirit.
The reduction is from the following variant
of graph 3-colorability,
called 3-colorability with constraint.
Given a graph $G=(V,E)$ and three integers $n_r$, $n_g$, $n_b$ in {\em unary} form,
can the graph $G$ be colored with the colors $R$, $G$ and $B$ such that
the numbers of vertices colored with $R$, $G$ and $B$
are $n_r$, $n_g$ and $n_b$, respectively?

The polynomial time reduction to data value membership problem is as follows.
Let $V = \{1,\ldots,n\}$
and $E = \{(i_1,j_1),\ldots,(i_m,j_m)\}$.

We define $\Sigma=\{\vartheta_R,\vartheta_G,\vartheta_B,\nu_1,\ldots,\nu_n\}$.
We take
$$
\nu_{i_1}\nu_{j_1}\cdots \nu_{i_m}\nu_{j_m}
\underbrace{\vartheta_R\cdots \vartheta_R}_{n_r\textrm{ \scriptsize times}}
\underbrace{\vartheta_G \cdots \vartheta_G}_{n_g\textrm{ \scriptsize times}}
\underbrace{\vartheta_B\cdots \vartheta_B}_{n_b\textrm{ \scriptsize times}}
$$
as the sequence of finite labels.

Then, we construct a weak $2$-PA over $\Sigma$
that accepts data words of the form
$$
{\nu_{i_1} \choose c_1}{\nu_{j_1} \choose d_1}
\cdots
{\nu_{i_m} \choose c_m}{\nu_{j_m} \choose d_m}
{\vartheta_R \choose a_1}\cdots {\vartheta_R \choose a_{n_r}}
{\vartheta_G \choose a_{1}'}\cdots {\vartheta_G \choose a_{n_g}'}
{\vartheta_B \choose a_{1}''}\cdots {\vartheta_B \choose a_{n_b}''}
$$
where
\begin{itemize}
\item
$\nu_{i_1},\nu_{j_1},\ldots,\nu_{i_m},\nu_{j_m} \in \{\nu_1,\ldots,\nu_n\}$;
\item
in the sub-word
${\nu_{i_1} \choose c_1} {\nu_{j_1} \choose d_1} \cdots {\nu_{i_m} \choose c_m} {\nu_{j_m} \choose d_m}$,
every two positions with the same labels have the same data value
, see Example~\ref{e: monadic consistent};
\item
the data values $a_1,\ldots,a_{n_r},a_1',\ldots,a_{n_g}',a_1'',\ldots,a_{n_b}''$ are pairwise different;
\item
For each $i=1,\ldots,m$,
the data values $c_i,d_i$ appear among $a_1,\ldots,a_{n_r}$,
$a_1',\ldots,a_{n_g}'$, $a_1'',\ldots,a_{n_b}''$
such that the following holds:
\begin{itemize}
\item
if $c_i$ appears among $a_1,\ldots,a_{n_r}$,
then $d_i$ appears among $a_{1}',\ldots,a_{n_g}'$ or $a_1'',\ldots,a_{n_b}''$;
\item
if $c_i$ appears among $a_{1}',\ldots,a_{n_g}'$,
then $d_i$ appears either among $a_1,\ldots,a_{n_r}$ or $a_{1}'',\ldots,a_{n_b}''$; and
\item
if $c_i$ appears among $a_{1}'',\ldots,a_{n_b}''$,
then $d_i$ appears among $a_{1},\ldots,a_{n_r}$ or $a_1',\ldots,a_{n_g}'$.
\end{itemize}
\end{itemize}
Note that we can store the integers $r$, $g$, $b$ and $m$
in the internal states $\A$,
thus, enable $\A$ to ``count'' up to $n_r,n_g,n_b$ and $m$.
We have each state for the numbers $1,\ldots,n_r$, $1,\ldots,n_g$, $1,\ldots,n_b$
and $1,\ldots,m$.
Furthermore, the unary form of $n_r$, $n_g$ and $n_b$ is crucial here
to ensure that the number of the states of $\A$
is still polynomial in the length of the input.

Now the graph $G$ is $3$-colorable with constraint
if and only if there exits
$c_1 d_1 \cdots c_m d_m a_1\cdots a_{n_r} a_1'\cdots a_{n_g}' a_1''\cdots a_{n_b}''$ such that
$$
{\nu_{i_1}\choose c_{1}}{\nu_{j_1}\choose d_{1}}
\cdots
{\nu_{i_m}\choose c_{m}}{\nu_{j_m}\choose d_{m}}
{\vartheta_R\choose a_{1}} \cdots {\vartheta_R\choose a_{n_r}}
{\vartheta_G\choose a_{1}'} \cdots {\vartheta_G\choose a_{n_g}'}
{\vartheta_B\choose a_{1}''} \cdots {\vartheta_B\choose a_{n_b}''}
$$
is accepted by $\A$, and
the NP-completeness of data value membership problem follows.

\section{Top view weak $k$-PA}
\label{s: top view weak k-pa}

In this section we are going to restrict the definition
of weak $k$-PA so that its emptiness problem becomes decidable.
Roughly speaking, {\em top view} weak PA are weak PA
where the equality test is performed only
between the data values seen by the last and the second last placed pebbles.
That is, if pebble~$i$ is the head pebble,
then it can only compare the data value it reads
with the data value read by pebble~$(i+1)$.
It is not allowed to compare its data value with
those read by pebble~$i+2,\ldots,k$.

Formally, the transitions of top view weak $k$-PA
$\A = \langle Q,q_0,\mu,F\rangle$
are of the form
$$
(i,\sigma,V,q) \to (q',\ttAct)
$$
where $V$ is either $\emptyset$ or $\{i+1\}$.

The definition of top view weak $k$-PA
is defined by setting
$$
V =
\left\{
\begin{array}{ll}
\emptyset, & \textrm{if } a_{\theta(i+1)} \neq a_{\theta(i)}
\\
\{i+1\}, & \textrm{if } a_{\theta(i+1)} = a_{\theta(i)}
\end{array}
\right.
$$
in the definition of transition relation in Subsection~\ref{ss: pa}.
Note that top view weak $2$-PA are just the same
as weak $2$-PA.
We can also define the alternating version
of top view weak $k$-PA.
However, just like in the case of weak $k$-PA,
alternating, nondeterministic and deterministic top view weak $k$-PA
have the same recognition power.

\begin{theorem}
\label{t: top view weak k-pa in alternating 1ra}
For every top view weak $k$-PA $\A$,
there is a one-way alternating 1-RA $\A'$
such that $L(\A')=L(\A)$.
Moreover, the construction of $\A'$ is effective.
\end{theorem}
\begin{proof}
The proof is a straightforward generalization of the proof of Theorem~\ref{t: simulation weak 2-pa}.
Each placement of a pebble is simulated by
``Guess--Split--Verify'' procedure.
Since each pebble~$i$ can only compare its data value
with the one seen by pebble~$i+1$,
$\A'$ does not need to store the data values seen by pebble~$i+2,\ldots,k$.
It only need to store the data value seen by pebble~$i+1$,
thus, one register suffices.
\end{proof}

Following Theorem~\ref{t: top view weak k-pa in alternating 1ra},
we immediately obtain the decidability
of the emptiness problem for top view weak $k$-PA.
\begin{corollary}
\label{c: top view k-pa is decidable}
The emptiness problem for top view weak $k$-PA is decidable.
\end{corollary}

\begin{remark}
\label{r: tight bound}
Since the emptiness problem for ordinary 2-PA and
for weak 3-PA is already undecidable
(See Theorem~\ref{t: undecidable weak 3-pa} and~\cite[Theorem 4]{KaminskiT08}),
it seems that top view weak PA is a tight boundary
of a subclass of PA languages for which the emptiness problem is decidable.
\end{remark}

\begin{theorem}
\label{t: LTL1 is in top view weak k-pa}
For every sentence $\psi \in \textrm{LTL}^{\downarrow}(\Sigma,\ttX,\ttU)$,
there exists a top view weak $k$-PA $\A_{\psi}$,
where $k = \sffqr(\psi)+1$,
such that $L(\A_{\psi})=L(\psi)$.
\end{theorem}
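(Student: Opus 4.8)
The plan is to prove the statement by induction on the structure of the sentence $\psi$, or more precisely on the nesting depth of the freeze quantifier $\downarrow$, which is exactly $\sffqr(\psi)$. The key idea is that each level of freeze nesting corresponds to one additional pebble. When the register is ``frozen'' to hold the current data value at some position, that position is precisely what a pebble marks; and the innermost sub-formulas, which contain no freeze and at most reference the currently frozen value via $\uparrow$, are simulated by the topmost pebble comparing its data value with the one seen by the pebble directly below it. This matches the top view restriction exactly: pebble~$i$ may only compare against pebble~$(i+1)$.

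\textbf{The construction.}
First I would make precise the correspondence between registers and pebbles. In a data word $w$, evaluating $w,i \models_a \varphi$ involves one ``active'' data value $a$ in the register together with the scanning position $i$. I would maintain the invariant that when the automaton simulates a sub-formula $\varphi$ with $\sffqr(\varphi) = r$, it uses pebble number $(r+1)$ as its head, the pebble directly below it (pebble $r+2$, if any) marks the position whose data value is currently frozen in the conceptual register, and the head scans rightward reading the current symbol. The Boolean connectives $\neg,\vee,\wedge$ are handled using alternation (universal states for $\wedge$ and for the universal reading of $\neg$ pushed to the leaves, as usual), which is available since alternating and deterministic top view weak $k$-PA coincide. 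The operator $\ttX\varphi$ is a single $\ttRight$ move of the head pebble. The until $\varphi\ttU\psi$ is simulated by a nondeterministic (existential) guess of the witnessing position $j$, moving the head right while verifying $\varphi$ at each intermediate position and $\psi$ at $j$; the labels $\sigma \in \Sigma$ are checked directly against $\sigma_{\theta(i)}$. The two genuinely data-aware cases are the crucial ones: for $\uparrow$, the automaton tests whether its head pebble reads the same data value as the pebble immediately below it, which is exactly the information carried in the set $V$ (namely $\uparrow$ holds iff $V = \{i+1\}$); for $\downarrow\varphi$, the automaton executes $\ttPlace$, dropping a new pebble at the current position so that this position's data value becomes the ``frozen'' value for the sub-computation of $\varphi$, recursively simulated with one more pebble, since $\sffqr(\downarrow\varphi) = \sffqr(\varphi)+1$.

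\textbf{Counting pebbles and the base case.}
The pebble count works out precisely because $\sffqr(\downarrow\varphi)=\sffqr(\varphi)+1$ and every other constructor leaves $\sffqr$ unchanged (taking the max for binary connectives and until), so the maximum number of simultaneously placed pebbles along any branch of the computation equals the maximum freeze-nesting depth along the corresponding syntactic path, which is $\sffqr(\psi)$; adding the bottom pebble that reads the top-level position gives $k = \sffqr(\psi)+1$ total. For the base case, a sentence with $\sffqr(\psi)=0$ has no freeze quantifiers, so $\uparrow$ never occurs freely and in fact cannot occur at all in a sentence of freeze rank $0$; such a formula is a pure one-way LTL formula over $\Sigma$ with no data comparisons, and it is simulated by a single pebble ($k=1$) exactly as standard LTL is simulated by a one-way automaton.

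\textbf{Main obstacle.}
The hard part will be making the ``Guess--Split--Verify'' bookkeeping rigorous across the recursion, in particular ensuring the top view restriction is never violated: I must verify that at no point does the simulation need to compare the head pebble's data value against any pebble other than the one immediately below it. This requires the invariant that the only data value ever consulted during the evaluation of a sub-formula is the single frozen value, and that this value is always the one marked by the adjacent lower pebble, never a deeper one. A related subtlety is the freeze semantics $w,i\models_a \downarrow\varphi \iff w,i\models_{a_i}\varphi$: after placing the new pebble I must guarantee that the head pebble, when it later scans to the right, can still recover the frozen value by comparison with the pebble just placed (which stays fixed below it in the stack discipline), and that lifting this pebble correctly restores the enclosing register context. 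Pinning down these invariants, together with the alternation pattern for $\neg$ over until and the until-witness guessing, is the only place where care is needed; everything else is a direct translation of the semantics into transitions.
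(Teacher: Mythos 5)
Your proposal is correct and follows essentially the same route as the paper: a recursive construction of an alternating top view weak $k$-PA in which $\downarrow$ triggers a \ttPlace, $\uparrow$ is decided by the test $V=\{i+1\}$ against the immediately preceding pebble, Boolean connectives and the unfolded until are handled by alternation, and the pebble count is bounded by the freeze-nesting depth along each computation path, after which one invokes the equivalence of alternating and deterministic top view weak $k$-PA. The only divergence is your concern about ``restoring the enclosing register context'' after a nested freeze, which turns out to be moot: the evaluation of $\downarrow\varphi$ is entirely contained in the rebound register context, so no pebble ever needs to be lifted to recover an outer frozen value.
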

\begin{proof}
Let $\psi$ be an $\textrm{LTL}^{\downarrow}(\Sigma,\ttX,\ttU)$ sentence.
We construct an alternating top view weak $k$-PA $\A_{\psi}$,
where $k=\sffqr(\psi)+1$ such that
given a data word $w$, the automaton $\A_{\psi}$ checks whether $w,1\models \psi$.
$\A_{\psi}$ accepts if it is so. Otherwise, it rejects.

Intuitively, the computation of $w,1\models\psi$ is done recursively as follows.
The automaton $\A_{\psi}$ ``consists of'' the automata $\A_{\varphi}$
for all sub-formula $\varphi$ of $\psi$, including $\A_{\epsilon}$
to represent the empty formula $\epsilon$.
\begin{itemize}
\item
The automaton $\A_{\epsilon}$ accepts every data words.
\item
If $\psi = \sigma \varphi$,
then check whether the current label is $\sigma$.
If it is not, then $\A$ rejects immediately.
Otherwise, $\A_{\psi}$ proceeds to run $\A_{\varphi}$.
\item
If $\psi = \varphi \vee \varphi'$,
then $\A_{\psi}$ nondeterministically chooses one of $\A_{\varphi}$ or $\A_{\varphi'}$
and proceeds to run one of them.
\item
If $\psi = \varphi \wedge \varphi'$,
then $\A_{\psi}$ splits its computation (by conjunctive branching) into two
and proceed to run both of $\A_{\varphi}$ and $\A_{\varphi'}$.
\item
If $\psi = \ttX \varphi$,
then $\A_{\psi}$ moves to the right one step.
If it reads the right-end marker, then the automaton rejects immediately.
Otherwise, it proceeds to run $\A_{\varphi}$.
\item
If $\psi = \uparrow \varphi$,
then $\A_{\psi}$ checks whether the data value seen by its head pebble
is the same as the one seen by the second last placed pebble.
If it is not the same, then it rejects immediately.
Otherwise, it proceeds to run $\A_{\varphi}$.
\item
If $\psi = \downarrow\varphi$,
then $\A_{\psi}$ places a new pebble and
proceeds to run $\A_{\varphi}$.
\item
If $\psi = \varphi \ttU \varphi'$,
then $\A_{\psi}$ it runs $\A_{\varphi' \vee (\varphi \wedge \sttX(\varphi\sttU\varphi'))}$.
\item
If $\psi = \neg \varphi$,
then $\A_{\psi}$ runs $\A_{\varphi}$.
If $\A_{\varphi}$ accepts, then $\A_{\psi}$ rejects. Otherwise, $\A_{\psi}$ accepts.
\end{itemize}
Note that since $\sffqr(\varphi)=k$,
on each computation path then the automaton $\A_{\psi}$ only needs to place the pebble $k$ times,
thus, $\A_{\psi}$ requires only $k+1$.
It is a straight forward induction to show that $L(\A_{\psi})=L(\psi)$.
\end{proof}

Our next results deals with
the expressive power of LTL$^{\downarrow}_1(\Sigma,\ttX,\ttU)$
based on the freeze quantifier rank.
It is an analog of the classical hierarchy of first order logic
based on the ordinary quantifier rank.
We start by defining an LTL$^{\downarrow}_1(\Sigma,\ttX,\ttU)$ sentence
for the language $\cR^+_m$ defined in Subsection~\ref{ss: hierarchy weak pa}.

\begin{lemma}
\label{l: R_k for LTL}
For each $k=1,2,3,\ldots$,
there exists a sentence
$\psi_{k}$ in LTL$^{\downarrow}_1(\Sigma,\ttX,\ttU)$
such that $L(\psi_{k})=\cR^+_{k}$ and
\begin{itemize}
\item
$\sffqr(\psi_1)=1$; and
\item
$\sffqr(\psi_{k})=k-1$, when $k\geq 2$.
\end{itemize}
\end{lemma}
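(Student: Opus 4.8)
The plan is to construct the sentences $\psi_k$ explicitly and by induction on $k$, exploiting the recursive structure of the language $\cR^+_m$. Recall that a word in $\cR^+_m$ has the shape $\binom{\sigma}{a_0}\binom{\sigma}{a_1}\,w_1\,\binom{\sigma}{a_1}\binom{\sigma}{a_2}\cdots\binom{\sigma}{a_{m-1}}\binom{\sigma}{a_m}\,w_{m-1}\cdots$ with the constraints $a_i\neq a_{i+1}$ and $a_i\notin\sfCont_{\scriptfD}(w_i)$. The key observation is that such a word is built from ``blocks'' in which a data value $a_i$ is first seen, must reappear later (marking the start of the next block), and must \emph{not} reappear in the intervening segment $w_i$. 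The freeze quantifier $\downarrow$ lets us grab a value into the register, and $\uparrow$ lets us test for its reoccurrence; the operators $\ttX$ and $\ttU$ let us navigate forward and assert ``no earlier match'' conditions. So the plan is to write a formula that, at each level, freezes the current value $a_i$, asserts that it recurs exactly once further to the right (at the position labelling the next block), forbids its occurrence strictly in between, and then recurses on the suffix to describe $\cR^+_{m-1}$-like structure.

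First I would define an auxiliary family of formulas $\chi_k$ describing the suffix structure, and set $\psi_k$ to be the formula asserting the full $\cR^+_k$ pattern read from position $1$. For the base case I would write $\psi_1$ by hand: $\cR^+_1$ consists of words $\binom{\sigma}{a_0}\binom{\sigma}{a_1}$ with $a_0\neq a_1$, so $\psi_1$ freezes $a_0$ with one $\downarrow$ and asserts that the very next position carries a \emph{different} value (using $\ttX\neg\!\uparrow$) together with an end-of-word condition, giving $\sffqr(\psi_1)=1$. For the inductive step I would express the requirement ``value $a_i$ recurs later but not in the segment $w_i$'' using a formula of the form $\downarrow\big(\,\neg\!\uparrow \ttU (\uparrow \wedge \text{(next block condition)})\,\big)$: the $\neg\!\uparrow\ttU(\uparrow\wedge\cdots)$ part says that going forward, $a_i$ does not appear until a designated later position, where it reappears and the remaining structure holds. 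Crucially, each level of the block recursion consumes exactly one additional freeze, which is what produces the bound $\sffqr(\psi_k)=k-1$ for $k\geq 2$: the first block reuses the value frozen at the top without spending a fresh quantifier in the right bookkeeping, whereas each of the subsequent $k-2$ internal transitions costs one $\downarrow$.

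The delicate accounting here is the main obstacle, and it is exactly why the two cases $k=1$ versus $k\geq 2$ are stated separately. I would need to verify that the freeze quantifiers nest additively rather than multiplicatively: since $\sffqr(\varphi\ttU\psi)=\max(\sffqr(\varphi),\sffqr(\psi))$ and $\sffqr(\downarrow\varphi)=\sffqr(\varphi)+1$, I must arrange the formula so that at most one $\downarrow$ lies on any path through each ``recursion level'', and so that the $\neg\!\uparrow$ guard in the until-subformula, which tests the \emph{currently frozen} value rather than introducing a new one, contributes no additional freeze depth. The trickiest part is ensuring the anchoring of block boundaries is expressed purely with forward-time operators and a single live register: because $a_i$ reappears exactly at the start of block $i{+}1$, I can let that reoccurrence serve as the implicit marker for where the next frozen value is read, so I must check that the value to be frozen next is precisely the one at that reoccurrence position. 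Getting the off-by-one in the freeze rank correct — confirming that $\psi_2$ genuinely needs only one freeze while $\psi_1$ also needs one, and that the count increments by exactly one at each further level — is where I expect to spend the most care.

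Finally, I would close the argument with a routine induction verifying $L(\psi_k)=\cR^+_k$, checking both inclusions against the two defining constraints ($a_i\neq a_{i+1}$ and $a_i\notin\sfCont_{\scriptfD}(w_i)$), using the semantics of $\downarrow$, $\uparrow$, $\ttX$, and $\ttU$ given in Subsection~\ref{ss: ltl}. This verification is mechanical once the formula is pinned down, so the real content is the construction and the freeze-rank bookkeeping described above.
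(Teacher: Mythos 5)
Your plan matches the paper's proof essentially exactly: the paper likewise defines auxiliary suffix formulas $\varphi_k$ (your $\chi_k$) with $\sffqr(\varphi_k)=k-1$, builds them by the recursion $\varphi_{k+1}:=\ttX(\neg\uparrow)\wedge\ttX\bigl(\downarrow\ttX\bigl((\neg\uparrow)\ttU(\uparrow\wedge\varphi_k)\bigr)\bigr)$ — the same freeze--until--reoccurrence pattern you describe — and then obtains $\psi_k$ from $\varphi_{k-1}$ with one extra top-level $\downarrow$ guarding $a_0\neq a_1$, verifying $L(\psi_k)=\cR^+_k$ by induction on $k$ via the semantics of $\downarrow,\uparrow,\ttX,\ttU$. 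The rank accounting you flag as delicate works out exactly as you anticipate, since each recursion level contributes a single $\downarrow$ and the until/conjunction cases only take maxima.
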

\begin{proof}
First, we define a formula $\varphi_k$
such that $\sffqr(\varphi_k)=k-1$ and
for every data word $w = {\sigma\choose d_1}\cdots {\sigma\choose d_n}$,
for every $i=1,\ldots,n$,
\begin{eqnarray}
\label{eq: phi_k for R_k^+}
w,i \models_{d_i} \varphi_k
& \textrm{if and only if} &
{\sigma\choose d_i}\cdots {\sigma\choose d_n} \in \cR^+_{k}.
\end{eqnarray}
We construct $\varphi_k$ inductively as follows.
\begin{itemize}
\item
$\varphi_1 := \ttX(\neg \uparrow) \wedge \neg(\ttX\sfTrue)$.
\item
For each $k=1,2,3,\ldots$,
\begin{eqnarray*}
\varphi_{k+1} & := &
\ttX(\neg \uparrow)
\wedge
\ttX
\Big(
\downarrow
\ttX
\Big(
(\neg\uparrow) \ttU (\uparrow \wedge \varphi_k)
\Big)
\Big)
\end{eqnarray*}
\end{itemize}
Note that since $\sffqr(\varphi_1)=0$,
then for each $k=1,2,\ldots$,
$\sffqr(\varphi_k)=k-1$.

Assuming first that $\varphi_k$ satisfies
the property in Equation~\ref{eq: phi_k for R_k^+},
the desired sentence $\psi_k$ is defined as follows.
\begin{itemize}
\item
$\psi_1 := \downarrow \big(\ttX(\neg \uparrow) \wedge \neg(\ttX\sfTrue)\big)$.
\item
For each $k=2,3,\ldots$,
\begin{eqnarray*}
\psi_{k} & := &
\downarrow(\ttX(\neg \uparrow))
\wedge
\ttX
\Big(
\downarrow
\ttX
\Big(
(\neg\uparrow) \ttU (\uparrow \wedge \varphi_{k-1})
\Big)
\Big)
\end{eqnarray*}
\end{itemize}
Since $\sffqr(\varphi_{k-1})=k-2$,
then $\sffqr(\psi_k)=k-1$.

Now we want to show that the formula $\varphi_k$
satisfies Equation~\ref{eq: phi_k for R_k^+}.
The proof is by induction on $k$.
The base case, $k=1$, is trivial.
Suppose, for the induction hypothesis,
the formula $\varphi_k$ satisfies Equation~\ref{eq: phi_k for R_k^+}.

The induction step is as follows.
Let $w = {\sigma \choose d_1}\cdots{\sigma \choose d_n}$.
We have the following chain of application of the semantics of LTL.
\begin{eqnarray*}
w,i & \models_{d_i} & \varphi_{k+1}
\\
& \Updownarrow &
\\
w,i & \models_{d_i} & \ttX(\neg\uparrow) \wedge
\ttX\big( \downarrow \ttX((\neg\uparrow)\ttU (\uparrow \wedge \varphi_k) )\big)
\\
& \Updownarrow &
\\
w,i & \models_{d_i} & \ttX(\neg \uparrow) \qquad\textrm{and}\qquad
\begin{array}{rcl}
w,i  & \models_{d_i} &  \ttX\big( \downarrow \ttX((\neg\uparrow)\ttU (\uparrow \wedge \varphi_k) )\big)
\end{array}
\end{eqnarray*}
For the first part,
we have
\begin{eqnarray}
\label{eq: derivation 1}
w,i  \models_{d_i}  \ttX(\neg \uparrow)
& \textrm{if and only if} &
d_i \neq d_{i+1}
\end{eqnarray}
Now we evaluate the second part.
\begin{eqnarray}
\nonumber
w,i  & \models_{d_i} &
\ttX\big( \downarrow \ttX((\neg\uparrow)\ttU (\uparrow \wedge \varphi_k) )\big)
\qquad\qquad\qquad\qquad\qquad\qquad\qquad
\\
\nonumber
& \Updownarrow &
\\
\nonumber
w,i+1  & \models_{d_i} &
\big( \downarrow \ttX((\neg\uparrow)\ttU (\uparrow \wedge \varphi_k) )\big)
\qquad\qquad\qquad\qquad\qquad\qquad\qquad
\\
\nonumber
& \Updownarrow &
\\
\nonumber
w,i+1  & \models_{d_{i+1}} &
\ttX((\neg\uparrow)\ttU (\uparrow \wedge \varphi_k) )
\qquad\qquad\qquad\qquad\qquad\qquad\qquad
\\
\nonumber
& \Updownarrow &
\\
\label{eq: derivation 2}
w,i+2  & \models_{d_{i+1}} &
(\neg\uparrow)\ttU (\uparrow \wedge \varphi_k)
\qquad\qquad\qquad\qquad\qquad\qquad\qquad
\end{eqnarray}
Equation~\ref{eq: derivation 2} holds if and only if
there exists $j$ such that $i+2\leq j$ and
\begin{enumerate}
\item
$w,j \models_{d_{i+1}} \uparrow \wedge \varphi_k$,
\item
$w,j' \models_{d_{i+1}} \neg \uparrow$, for each $j'=i+1,\ldots,j-1$.
\end{enumerate}
By the semantics of LTL and the induction hypothesis,
Clause~1 holds if and only if $d_j=d_{i+1}$ and
${\sigma\choose d_j}\cdots{\sigma\choose d_n} \in \cR_k^+$.
The meaning of Clause~2 is $d_j' \neq d_{i+1}$, for each $j'=i+1,\ldots,j-1$.
Both clauses, together with Equation~\ref{eq: derivation 1},
means that ${\sigma\choose d_i}\cdots {\sigma\choose d_n} \in \cR_{k+1}^+$.
This completes the induction hypothesis.
\end{proof}

\begin{lemma}
\label{l: R_k+1 not in fqr k-1}
For each $k=1,2,\ldots$,
the language $\cR^+_{k+1}$ is not expressible
by a sentence in LTL$^{\downarrow}_1(\Sigma,\ttX,\ttU)$
of freeze quantifier rank $k-1$.
\end{lemma}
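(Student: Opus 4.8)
The plan is a short proof by contradiction that combines the two principal results already established: the simulation of LTL sentences by top view weak PA (Theorem~\ref{t: LTL1 is in top view weak k-pa}) and the strict pebble hierarchy for weak PA languages (Theorem~\ref{t: hierarchy weak PA}). The one numerical relation to keep track of throughout is the $+1$ that relates the freeze quantifier rank of a sentence to the number of pebbles of the simulating automaton.

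Suppose, toward a contradiction, that $\cR^+_{k+1}$ is expressible by a sentence $\psi$ in LTL$^{\downarrow}_1(\Sigma,\ttX,\ttU)$ with $\sffqr(\psi)=k-1$, that is, $L(\psi)=\cR^+_{k+1}$. First I would apply Theorem~\ref{t: LTL1 is in top view weak k-pa} to obtain a top view weak $k'$-PA $\A_{\psi}$ with $k'=\sffqr(\psi)+1=(k-1)+1=k$ and $L(\A_{\psi})=L(\psi)=\cR^+_{k+1}$. Thus $\cR^+_{k+1}$ is accepted by a top view weak $k$-PA. Next I would observe that a top view weak $k$-PA is, by its very definition, a special case of a weak $k$-PA: the top view restriction merely forces the set $V$ in every transition to be $\emptyset$ or $\{i+1\}$, so each transition of $\A_{\psi}$ is already a legal weak $k$-PA transition. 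Consequently $\cR^+_{k+1}\in\textrm{wPA}_k$. But Theorem~\ref{t: hierarchy weak PA}$(1)$ asserts precisely $\cR^+_{k+1}\notin\textrm{wPA}_k$, a contradiction. Hence no sentence of freeze quantifier rank $k-1$ can define $\cR^+_{k+1}$.

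I do not expect a deep obstacle here; essentially all the content is delegated to the two cited theorems, and the only point needing care is the bookkeeping of the index $k$ through the $+1$ of Theorem~\ref{t: LTL1 is in top view weak k-pa}, together with the (definitional) embedding of top view weak $k$-PA into weak $k$-PA. The one genuine refinement worth making is to rule out not merely rank exactly $k-1$ but rank \emph{at most} $k-1$, since that is what makes the freeze-quantifier-rank hierarchy strict. For this I would note that a sentence of rank $j-1\le k-1$ yields, by the identical argument, a weak $j$-PA with $j\le k$, so $\cR^+_{k+1}\in\textrm{wPA}_j\subseteq\textrm{wPA}_k$ by the inclusion in Theorem~\ref{t: hierarchy weak PA}$(2)$, and the contradiction with Theorem~\ref{t: hierarchy weak PA}$(1)$ is unchanged. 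Combined with Lemma~\ref{l: R_k for LTL}, which exhibits a rank-$k$ sentence defining $\cR^+_{k+1}$, this pins the exact threshold and yields strictness of the hierarchy based on freeze quantifier rank.
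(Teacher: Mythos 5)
Your proof is correct and is essentially the paper's own argument, merely phrased as a contradiction rather than as a chain of contrapositives: both rest on Theorem~\ref{t: LTL1 is in top view weak k-pa} (a rank-$(k-1)$ sentence yields a top view weak $k$-PA, hence a weak $k$-PA) and Theorem~\ref{t: hierarchy weak PA} ($\cR^+_{k+1}\notin\textrm{wPA}_k$). The bookkeeping of the $+1$ and the remark about ranks at most $k-1$ are fine but add nothing beyond what the paper's two-line proof already delivers.
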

\begin{proof}
By Theorem~\ref{t: hierarchy weak PA},
$\cR^+_{k+1}$ is not accepted by weak $k$-PA,
thus, it is also not accepted by top-view $k$-PA.
Then, by Theorem~\ref{t: LTL1 is in top view weak k-pa},
$\cR^+_{k+1}$ is not expressible by LTL$^{\downarrow}(\Sigma,\ttX,\ttU)$
sentence of freeze quantifier rank $k-1$.
\end{proof}

Combining both Lemmas~\ref{l: R_k for LTL} and~\ref{l: R_k+1 not in fqr k-1},
we obtain the following strict hierarchy of LTL$^{\downarrow}(\Sigma,\ttX,\ttU)$
based on its freeze quantifier rank.
\begin{theorem}
\label{t: fqr k+1 > fqr k}
For each $k=1,2,\ldots$,
the class of sentences in LTL$^{\downarrow}(\Sigma,\ttX,\ttU)$
of freeze quantifier rank $k+1$
is strictly more expressive than those of freeze quantifier rank $k$.
\end{theorem}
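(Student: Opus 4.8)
The plan is to derive the theorem directly by combining Lemmas~\ref{l: R_k for LTL} and~\ref{l: R_k+1 not in fqr k-1}, where I read the ``class of sentences of freeze quantifier rank $k$'' as the class of $\Sigma$-data languages definable by some sentence of $\sffqr$ at most $k$; write $\mathcal{L}_k$ for this class. The inclusion $\mathcal{L}_k \subseteq \mathcal{L}_{k+1}$ is immediate, since any sentence witnessing membership in $\mathcal{L}_k$ already has $\sffqr \leq k \leq k+1$. Hence the only real content is \emph{strictness}, and for that I would produce a single language lying in $\mathcal{L}_{k+1}$ but not in $\mathcal{L}_k$.

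The separating language I would use is $\cR^+_{k+2}$. First I would invoke Lemma~\ref{l: R_k for LTL} with $m = k+2$ (so $m \geq 3 \geq 2$), yielding a sentence $\psi_{k+2}$ with $L(\psi_{k+2}) = \cR^+_{k+2}$ and $\sffqr(\psi_{k+2}) = (k+2)-1 = k+1$; this gives $\cR^+_{k+2} \in \mathcal{L}_{k+1}$. Next I would apply Lemma~\ref{l: R_k+1 not in fqr k-1} after the re-indexing $k \mapsto k+1$: it then asserts that $\cR^+_{k+2}$ is not definable by a sentence of $\sffqr$ equal to $(k+1)-1 = k$, so $\cR^+_{k+2} \notin \mathcal{L}_k$. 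Combining the two memberships gives $\cR^+_{k+2} \in \mathcal{L}_{k+1} \setminus \mathcal{L}_k$, which is precisely the required strict separation.

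The one point that needs care — and which I regard as the only, minor, obstacle — is reconciling the surface phrasing ``freeze quantifier rank $k$'' in Lemma~\ref{l: R_k+1 not in fqr k-1} with the ``at most'' reading demanded by $\mathcal{L}_k$. I would settle this by appealing to the reasoning behind that lemma rather than its literal wording: any sentence $\psi$ with $\sffqr(\psi) \leq k$ yields, via Theorem~\ref{t: LTL1 is in top view weak k-pa}, a top view weak $(\sffqr(\psi)+1)$-PA recognizing $L(\psi)$, and $\sffqr(\psi)+1 \leq k+1$, so by the evident monotonicity of the top view pebble hierarchy in the number of pebbles this language is accepted by a top view weak $(k+1)$-PA. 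Since $\cR^+_{k+2}$ is \emph{not} accepted by any weak $(k+1)$-PA (Theorem~\ref{t: hierarchy weak PA} with $k \mapsto k+1$), and hence by no top view weak $(k+1)$-PA, no such $\psi$ can define it. This confirms $\cR^+_{k+2} \notin \mathcal{L}_k$ in the intended sense and completes the argument.
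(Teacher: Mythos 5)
Your proposal is correct and follows essentially the same route as the paper, whose proof of this theorem is exactly the combination of Lemma~\ref{l: R_k for LTL} (giving $\cR^+_{k+2}$ a defining sentence of freeze quantifier rank $k+1$) with Lemma~\ref{l: R_k+1 not in fqr k-1} re-indexed to show $\cR^+_{k+2}$ is not definable at rank $k$. Your extra care in upgrading ``rank exactly $k$'' to ``rank at most $k$'' via Theorem~\ref{t: LTL1 is in top view weak k-pa} and the monotonicity of the pebble hierarchy is a sound and welcome clarification of a point the paper leaves implicit.
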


\section{Top view weak $k$-PA}
\label{s: top view weak k-pa}

In this section we are going to define {\em top view} weak PA.
Roughly speaking, top view weak PA are weak PA
where the equality test is performed only
between the data values seen by the last and the second last placed pebbles.
That is, if pebble~$i$ is the head pebble,
then it can only compare the data value it reads
with the data value read by pebble~$(i+1)$.
It is not allowed to compare its data value with
those read by pebble~$(i+2),(i+3),\ldots,k$.

Formally, top view weak $k$-PA is a tuple
$\A = \langle \Sigma, Q,q_0,\mu,F\rangle$
where $Q,q_0,F$ are as usual and
$\mu$ consists of transitions of the form:
$(i,\sigma,V,q) \to (q',\ttAct)$,
where $V$ is either $\emptyset$ or $\{i+1\}$.

The criteria for the application of transitions of top view weak $k$-PA
is defined by setting
$$
V =
\left\{
\begin{array}{ll}
\emptyset, & \textrm{if } a_{\theta(i+1)} \neq a_{\theta(i)}
\\
\{i+1\}, & \textrm{if } a_{\theta(i+1)} = a_{\theta(i)}
\end{array}
\right.
$$
in the definition of transition relation in Subsection~\ref{ss: pa}.
Note that top view weak $2$-PA and weak $2$-PA are the same.

\begin{remark}
\label{r: alt, det top view}
We can also define the alternating version
of top view weak $k$-PA.
However, just like in the case of weak $k$-PA,
alternating, nondeterministic and deterministic top view weak $k$-PA
have the same recognition power.
Furthermore, by using the same proof presented in Section~\ref{s: complexity weak pa},
it is straightforward to show that
the emptiness problem, the labelling problem, and
the data value membership problem have the same complexity lower bound
for top view weak $k$-PA, for each $k=2,3,\ldots$.
\end{remark}

The following theorem is a stronger version of Theorem~\ref{t: simulation weak 2-pa}.

\begin{theorem}
\label{t: top view weak k-pa in alternating 1ra}
For every top view weak $k$-PA $\A$,
there is a one-way alternating 1-RA $\A'$
such that $L(\A')=L(\A)$.
Moreover, the construction of $\A'$ is effective.
\end{theorem}
\begin{proof}
The proof is a straightforward generalization of the proof of Theorem~\ref{t: simulation weak 2-pa}.
Each placement of a pebble is simulated by
``Guess--Split--Verify'' procedure.
Since each pebble~$i$ can only compare its data value
with the one seen by pebble~$(i+1)$,
$\A'$ does not need to store the data values seen by pebbles~$(i+2),\ldots,k$.
It only needs to store the data value seen by pebble~$(i+1)$,
thus, one register is sufficient for the simulation.
\end{proof}

Following Theorem~\ref{t: top view weak k-pa in alternating 1ra},
we immediately obtain the decidability
of the emptiness problem for top view weak $k$-PA.
\begin{corollary}
\label{c: top view k-pa is decidable}
The emptiness problem for top view weak $k$-PA is decidable.
\end{corollary}

Since the emptiness problem for ordinary 2-PA (See~\cite[Theorem~4]{KaminskiT08}) and
for weak 3-PA is already undecidable,
it seems that top view weak PA is a tight boundary
of a subclass of PA languages for which the emptiness problem is decidable.

\begin{remark}
\label{r: LTL1 is in top view weak k-pa}
In~\cite{Tan09Reach} it is shown that
for every sentence $\psi \in \textrm{LTL}_1^{\downarrow}(\Sigma,\ttX,\ttU)$,
there exists a weak $k$-PA $\A_{\psi}$,
where $k = \sffqr(\psi)+1$,
such that $L(\A_{\psi})=L(\psi)$.
We remark that the proof actually shows that
the automaton $\A_{\psi}$ is top view weak $k$-PA.
Thus, it shows that the class of top view weak $k$-PA languages
contains the languages definable by $\textrm{LTL}_1^{\downarrow}(\Sigma,\ttX,\ttU)$.
\end{remark}

\section{Top view weak PA with unbounded number of pebbles}
\label{s: unbounded}

This section contains our quick observation on top view weak PA.
We note that the finiteness of the number of pebbles for top view weak PA
is not necessary.
In fact, we can just define top view weak PA with unbounded number of pebbles,
which we call top view weak unbounded PA.

We elaborate on it in the following paragraphs.
Let $\A = \langle\Sigma,Q,q_0,\mu,F\rangle$ be top view weak unbounded PA.
The pebbles are numbered with the numbers $1,2,3,\ldots$.
The automaton $\A$ starts the computation with only pebble~1 on the input word.
The transitions are of the form:
$(\sigma,\chi,q) \to (p,\ttAct)$,
where $\chi \in \{0,1\}$ and $\sigma,q,p,\ttAct$
are as in the ordinary weak PA.

Let $w = {\sigma_1 \choose a_1} \cdots {\sigma_n\choose a_n}$ be an input word.
A {\em configuration of $\A$ on $\triangleleft w \triangleright$}
is a triple $[i,q,\theta]$, where
$i \in \bbN$, $q \in Q$, and
$\theta : \bbN \rightarrow \{ 0,1,\ldots,n,n + 1 \}$.
The {\em initial} configuration is $[1,q_0,\theta_0]$,
where $\theta_0(1)=0$.
The accepting configurations
are defined similarly as in ordinary weak PA.

A transition $(\sigma ,\chi,p) \rightarrow \beta$
{\em applies to a configuration $[i,q,\theta]$}, if
\begin{enumerate}
\item[$(1)$]
$p = q$, and $ \sigma_{\theta(i)} = \sigma$,
\item[$(2)$]
$\chi = 1$ if  $a_{\theta(i-1)} = a_{\theta(i)}$, and
$\chi = 0$ if  $a_{\theta(i-1)} \neq a_{\theta(i)}$,
\end{enumerate}

Similarly, the transition relation $\vdash$ can be defined
as follows:
$[i,q,\theta]\vdash_{\sA} [i^\prime,q^\prime,\theta^\prime]$, if
there is a transition $\alpha \rightarrow (p,\ttAct) \in \mu$
that applies to $[i,q,\theta]$ such that
$q^\prime = p$,
for all $j < i$, $\theta^\prime(j)=\theta(j)$, and
\begin{itemize}
\item[-]
if $\ttAct = \ttRight$,
then $i^\prime=i$ and $\theta^\prime(i)=\theta(i)+1$,
\item[-]
if $\ttAct = \ttLift$,
then $i^\prime=i-1$
\item[-]
if $\ttAct = \ttPlace$,
then $i^\prime=i+1$, $\theta^\prime(i+1) = \theta^\prime(i)=\theta(i)$.
\end{itemize}
The acceptance criteria can be defined similarly.

It is straightforward to show that
$1$-way deterministic 1-RA can be simulated by top view weak unbounded PA.
Each time the register automaton change the content of the register,
the top view weak unbounded PA places a new pebble.

Furthermore, top view weak unbounded PA can be simulated by 1-way alternating 1-RA.
Each time a pebble is placed, the register automaton
performs ``Guess--Split--Verify'' procedure described in Section~\ref{s: decidability weak 2-pa}.
Thus, the emptiness problem for top view unbounded weak PA is still decidable.

\section{Concluding remark}
\label{s: conclusion}

In this paper we study pebble automata for data languages.
In particular, we establish a fragment of PA languages
for which the emptiness problem is decidable,
the so called top view weak PA.
As shown in this paper, top view weak PA inherit some nice properties
mentioned in Section~\ref{s: intro}.
\begin{enumerate}
\item 
Expressiveness: 
Top view weak PA strictly contain the languages expressible by 
LTL$^{\downarrow}_{1}(\Sigma,\ttX,\ttU)$.
\item
Decidability:
The emptiness problem is decidable.
\item
Efficiency:
The model checking problem, that is, 
testing whether a given string of length $n$ is accepted by a specific 
deterministic top view weak $k$-PA can be solved in $O(n^k)$ computation time.
\item
Closure properties:
Top view weak $k$-PA languages are closed under all boolean operations.
\item
Robustness:
Alternation and nondeterminism do not add expressive power
to top view weak $k$-PA languages.
\end{enumerate}
There are still lots of work to be done.
In order to be applicable in program verification and XML settings,
the model should work be infinite strings and unranked trees, respectively.
Thus, the question remains whether it is possible to 
extend top view weak PA to the settings of infinite strings and unranked trees,
while still preserving the five properties mentioned above.

\paragraph*{\em \bf Acknowledgment.}
The author would like to thank Michael Kaminski for his
invaluable directions and guidance related to this paper and
for pointing out the notion of unbounded pebble automata.


\appendix

\section{Counter Automata}
\label{app: s: counter automata}

A {\em Minsky $k$-counter automata} (CA),
with $\epsilon$-transitions and zero testing,
is a tuple $\A = \langle \Sigma, Q, q_0, \delta, F \rangle$,
where
\begin{itemize}
\item
$\Sigma$ is a finite alphabet;
\item
$Q$ is a finite set of states;
\item
$q_0$ is the initial state;
\item
$\delta\subseteq Q \times (\Sigma \cup \{\epsilon\})\times L \times Q $
is a transition relation over the instruction set
$L = \{\ttInc,\ttDec,\ttIfz\}\times\{1,\ldots,k\}$;
\item
$F\subseteq Q$ is the set of accepting set,
such that $q' \notin F$ whenever $(q,\epsilon,l,q')\in\delta$.
\end{itemize}
Given a word $w = \sigma_1\cdots\sigma_n \in \Sigma^*$,
a {\em configuration} of $\A$ is a triple $[i,q,\bv]$
where $0\leq i \leq n $, $q \in Q$ and
a counter valuation $\bv : \{1,\ldots,k\}\to \bbN$,
where $\bbN$ is the set of natural number $\{1,2,3,\ldots\}$.

The {\em initial} configuration is $[0,q_0,\bv_0]$
where $\bv_0(j)=0$ for each $j=1,\ldots,k$.
The {\em run} of $\A$ on $w$
is a sequence
$[0,q_0,\bv_0],[1,q_1,\bv_1],\ldots,[n,q_n,\bv_n]$
where
for each $i=0,\ldots,n-1$,
there exists a transition $(q_i,\sigma_{i+1},l,q_{i+1}) \in \delta$
and
\begin{itemize}
\item
if $l=(\ttInc,j)$ for some $j=1\ldots,k$,
then $\bv_{i+1}(j)=\bv_{i}(j)+1$
and for all other $j'\neq j$,
$\bv_{i+1}(j')=\bv_{i}(j')$.
\item
if $l=(\ttDec,j)$ for some $j=1\ldots,k$,
then $\bv{i}(j)>0$ and $\bv_{i+1}(j)=\bv_{i}(j)-1$
and for all other $j'\neq j$,
$\bv_{i+1}(j')=\bv_{i}(j')$.
\item
if $l=(\ttIfz,j)$ for some $j=1\ldots,k$,
then $\bv_{i}(j)=0$
and $\bv_{i+1}=\bv_{i}$.
\end{itemize}
The word $w$ is accepted by $\A$ if $q_n \in F$.
As usual, we denote by $L(\A)$ the set of
all words over $\Sigma$ accepted by $\A$.

We say that the automaton $\A$ is {\em incrementing}
if its counters may erroneously increase at any time.
More precisely,
The {\em run} of an incrementing $\A$ on $w$
is a sequence of configurations
$[0,q_0,\bv_0],[1,q_1,\bv_1],\ldots,[n,q_n,\bv_n]$
where
for each $i=0,\ldots,n-1$,
there exists a transition $(q_i,\sigma_{i+1},l,q_{i+1}) \in \delta$
and
\begin{itemize}
\item
if $l=(\ttInc,j)$ for some $j=1\ldots,k$,
then $\bv_{i+1}(j)\geq \bv_{i}(j)+1$
and for all other $j'\neq j$,
$\bv_{i+1}(j')\geq \bv_{i}(j')$.
\item
if $l=(\ttDec,j)$ for some $j=1\ldots,k$,
then $\bv{i}(j)>0$ and $\bv_{i+1}(j)\geq\bv_{i}(j)-1$
and for all other $j'\neq j$,
$\bv_{i+1}(j')\geq\bv_{i}(j')$.
\item
if $l=(\ttIfz,j)$ for some $j=1\ldots,k$,
then $\bv_{i}(j)=0$
and for all $j'=1,\ldots,k$,
$\bv_{i+1}(j')\geq\bv_{i}(j')$.
\end{itemize}

\begin{theorem}
\label{app: t: incrementing A}
\cite[Theorem~2.9]{DemriL06}
(See also~\cite[Theorem~6]{Mayr03} and~\cite{Schnoebelen02})
The nonemptiness problem for incrementing counter automata
is decidable, but not primitive recursive.
\end{theorem}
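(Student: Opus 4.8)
The plan is to treat the two halves of the statement separately: decidability by placing incrementing counter automata inside the theory of well-structured transition systems, and the non-primitive-recursive lower bound by a reduction that forces an Ackermannian computation to be simulated faithfully in spite of the errors. Following \cite{DemriL06}, both halves lean on the cited results of \cite{Mayr03} and \cite{Schnoebelen02}.

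For decidability, I would first observe that nonemptiness is nothing but control-state reachability. Since the consumed letter (or $\epsilon$) labelling each transition in $\delta$ may be chosen freely when we only ask whether \emph{some} word is accepted, the question $L(\A)\neq\emptyset$ is equivalent to: can $\A$ move from the initial configuration $[0,q_0,\bv_0]$ to some configuration whose control state lies in $F$. I would then order configurations by $(q,\bv)\sqsubseteq(q',\bv')$ iff $q=q'$ and $\bv(j)\leq\bv'(j)$ for every counter $j$; by Dickson's lemma this is a well-quasi-order on the infinite configuration space. The heart of the matter is to show that the incrementing semantics makes $\vdash$ \emph{monotone} with respect to $\sqsubseteq$. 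For $\ttInc$ and $\ttDec$ this is immediate: a larger valuation can always reproduce any step of a smaller one, the erroneous increments being exactly what lets the larger run keep up.

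The obstacle on this side is the instruction $\ttIfz$, whose guard $\bv(j)=0$ is not upward closed, so the reachability set is neither upward- nor downward-closed and naive saturation fails. I would resolve this through the duality between incrementing and \emph{lossy} counter machines: reversing a run of $\A$ turns each erroneous increment into an erroneous decrement, so that control-state reachability for $\A$ coincides with control-state reachability for a lossy machine $\A^{\mathrm{rev}}$. Taking decrement to be the non-blocking ``subtract one if positive'' operation, a lossy machine \emph{is} a genuine well-structured transition system: its reachable set is downward closed and effectively representable by the finite basis of its upward-closed complement, so control-state reachability is decidable by the generic backward (or forward-covering) WSTS procedure. This is the content of the cited theorem of Mayr~\cite{Mayr03}, which I would invoke directly.

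For the lower bound I would give a reduction that encodes into an incrementing counter automaton the iterated computation of a function of Ackermannian growth (such as $F_{\omega}$ in the fast-growing hierarchy), using one counter as a \emph{budget} that bounds the number of faithful Minsky steps the machine may take. The essential gadget makes erroneous increments \emph{detectable}: a later consistency check fails unless every intermediate step was error-free, so the only accepting runs are the honest ones, of length up to the chosen budget. Since the budget can be forced to be Ackermann-large while the machine stays of polynomial size, any decision procedure must in the worst case examine runs of non-primitive-recursive length, transferring Schnoebelen's hardness~\cite{Schnoebelen02}. The step I expect to be genuinely hard is precisely this gadget, namely engineering the budget together with its consistency check so that the erroneous increments can never be exploited to cheat while keeping the reduction polynomial; by contrast, the zero-test difficulty in the decidability half is packaged entirely inside the lossy-machine WSTS theory, so once the reversal and Mayr's theorem are in place that direction is routine.
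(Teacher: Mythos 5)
The paper offers no proof of this statement: it is imported verbatim from the literature, with the citations to Demri--Lazi\'c, Mayr and Schnoebelen standing in for the argument, so there is nothing internal to compare your write-up against beyond those references. Your outline is a faithful reconstruction of the proof found in the cited sources --- reducing nonemptiness to control-state reachability, correctly identifying the non-upward-closed zero-test guard as the obstacle to treating the incrementing machine directly as a well-structured transition system, passing to a lossy counter machine by reversing runs so that erroneous increments become losses, and transferring the non-primitive-recursive lower bound by an error-detecting budget in the style of Schnoebelen --- and I see no wrong step in it. Be aware, though, that both halves remain sketches at exactly the points where the cited papers do the real work: the reversal needs a little bookkeeping (the reversed lossy machine starts from an unknown counter valuation, which one handles by a pumping gadget before reducing to plain control-state reachability and final loss down to zero), and the hardness gadget that makes every erroneous increment detectable while keeping the reduction polynomial is the genuinely delicate construction, which you have flagged rather than carried out.
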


\section{Register automata}
\label{app: s: ra}

We are only going to sketch roughly the definition of register automata.
Readers interested in its more formal treatment can consult~\cite{DemriL06,KaminskiF94}.
In essence, $k$ register automaton, or, shortly $k$-RA,
is a finite state automaton equipped with a header to scan the input
and $k$ registers, numbered from $1$ to $k$.
Each register can store exactly one data value from $\fD$.
The automaton is {\em two-way} if the header can move to the left or to the right.
It is {\em alternating} if
it is allowed to branch into a finite number of parallel computations.

More formally, a two-way alternating $k$-RA over the label $\Sigma$
is a tuple $\A = \langle \Sigma, Q, q_0, u_0, \mu, F \rangle$ where
\begin{itemize}
\item
$Q_0$, $q_0 \in Q$ and $F\subseteq Q$
are the finite state of states, the initial state
and the set of final states, respectively.
\item
$u_0 = a_1\cdots a_k$ is the initial content of the registers.
\item
$\mu$ is a set of transitions of the following form.
\begin{itemize}
\item[{\em i})]
$(q,\sigma) \to  q'$ where $a\in \{\triangleleft,\triangleright\}$ and $q,q'\in Q$.
\\
That is, if the automaton $\A$ is in state $q$ and
the header is currently reading either of the symbols $\triangleleft,\triangleright$,
then the automaton can enter the state $q'$.
\item[{\em ii})]
$(q,\sigma,V)  \to  q'$ where $\sigma\in\Sigma$, $V \subseteq \{1,\ldots,k\}$ and $q,q'\in Q$.
\\
That is, if the automaton $\A$ is in state $q$ and
the header is currently reading a position labeled with $\sigma$ and
$V$ is the set of all registers containing the current data value,
then the automaton can enter the state $q'$.
\item[{\em iii})]
$q \to (q',I)$ where  $I \subseteq \{1,\ldots,k\}$ and $q,q'\in Q$.
\\
That is,
if the automaton $\A$ is in state $q$,
then the automaton can enter the state $q'$
and store the current data value into the registers whose indices belong to $I$.
\item[{\em iv})]
$q \to (q_1 \wedge \cdots \wedge q_i)$
and
$q \to (q_1 \vee \cdots \vee q_i)$ where $i \geq 1$ and $q,q'\in Q$.
\\
That is, if the automaton $\A$ is in state $q$,
then it can decide to perform
conjunctive or disjunctive branching into the states $q_1,\ldots,q_i$.
\item[{\em v})]
$q \to (q',\ttAct)$ where $\ttAct \in \{\ttLeft,\ttRight\}$ and $q,q'\in Q$.
\\
That is, if the automaton $\A$ is in state $q$,
then it can enter the state $q'$ and
move to the next or the previous word position.
\end{itemize}
\end{itemize}
A register automaton is called {\em non deterministic}
if the branchings of state (in item~({\em iv})) are all disjunctive.
It is called {\em one-way} if
the header is not allowed to move to the previous word position.

A {\em configuration} $\gamma = [j,q,b_1\cdots b_k]$
of the automaton $\A$ consists of
the current position of the header in the input word $j$,
the state of the automaton $q$ and the content of the registers $b_1\cdots b_k$.
The configuration $\gamma$ is called {\em accepting}
if the state is a final state in $F$.

From each configuration $\gamma$,
the automaton performs legitimate computation according to the transition relation
and enters another configuration $\gamma'$.
If the transition is branching,
then it can split into several configurations $\gamma_1',\ldots,\gamma_m'$.

Similarly, we can define the notion of {\em leads to acceptance}
for a configuration $\gamma$ as follows.
\begin{itemize}
\item
Every accepting configuration leads to acceptance.
\item
If $\gamma'$ is the configuration obtained from $\gamma$
by applying a non-branching transition,
then $\gamma$ leads to acceptance if and only if $\gamma'$ leads to acceptance.
\item
If $\gamma_1',\ldots,\gamma_m'$ are the configurations obtained from $\gamma$
by applying a disjunctive branching transition,
then $\gamma$ leads to acceptance if and only if at least one of $\gamma_1',\ldots,\gamma_m'$ leads to acceptance.
\item
If $\gamma_1',\ldots,\gamma_m'$ are the configurations obtained from $\gamma$
by applying a conjunctive branching transition,
then $\gamma$ leads to acceptance if and only if all of $\gamma_1',\ldots,\gamma_m'$ lead to acceptance.
\end{itemize}
An input word $w$ is accepted by $\A$ if
the initial configuration leads to acceptance.
As usual, $L(\A)$ denotes the language accepted by $\A$.

\section{Generalization of Theorem~\ref{t: simulation weak 2-pa}}
\label{app: s: simulation weak k-pa}

Let $\A = \langle Q, q_0, \mu, F \rangle$ be a  weak $k$-PA.
We will show how to construct one-way alternating $(k-1)$-RA $\A'$.
For our convenience, we assume that $\A$ is deterministic.
We also assume that $\A$ behaves as follows.
\begin{itemize}
\item
For every configuration $\gamma$ of $\A$,
there exists a transition in $\mu$ that applies to it.
\item
Only pebble~$k$ can enter a final state and
it does so only after it reads the right-end marker $\triangleright$.
\item
For every $i=2,\ldots,k$,
immediately after pebble~$i$ moves right,
pebble~$i-1$ is placed.
\item
For every $i=1,\ldots,k-1$,
pebble~$i$ is lifted only when
it reaches the right-end marker $\triangleright$.
\item
For every $i=1,\ldots,k-1$,
immediately after pebble~$i$ is lifted,
pebble~$(i+1)$ moves right.
\end{itemize}
See Subsection~\ref{app: ss: nondet to deterministic}
on how this normalization can be done.

We also assume that
the set of states $Q$ is partitioned into $Q_1\cup\cdots \cup Q_k$
where $Q_i\cap Q_j$ whenever $i\neq j$
and $Q_i$ is the set of states when pebble~$i$ is in control.

The automaton $\A' = \langle Q', q_0', u_0, \mu', F' \rangle$
is defined as follows.
\begin{itemize}
\item
The set of states is
$Q' = Q \cup Q^2 \cup Q^3 \cup \widetilde{Q}\cup \widetilde{Q\times Q}$,
where $\widetilde{Q}=\{\tilde{q} \mid q\in Q\}$ and
$\widetilde{Q\times Q}= \{\widetilde{(q,p)} \mid p,q \in Q\}$.
\item
The initial state is $q_0'=q_0 \in Q_k$.
\item
The initial assignment is $\#^{k-1}$.
\item
The set of final states is $F' = F \cup \{(q,q) : q\in Q\}$.
\end{itemize}
For our convenience, we number the registers of $\A'$
from $2$ to $k$, not from $1$ to $(k-1)$.
The set of transitions $\mu'$ consists of the following.
\begin{itemize}
\item
For $i=k-1,\ldots,1$, we have the following transitions.
\begin{enumerate}
\item
For each transition $(i,\sigma,V,q)\to (q',\ttRight)\in \mu$,
there are transitions $((q,p),\sigma,V) \to (q',p) \in \mu'$ for all $p \in Q_{i+1}$.
\item
For each transition $(i,P,V,q)\to(q',\ttPlace) \in \mu$,
there are the following transitions in $\mu'$.
For every $p \in Q_{i+1}$,
\begin{eqnarray*}
(q,p) & \to & \widetilde{(q,p)},\{i\}
\\
\widetilde{(q,p)} & \to & \bigvee_{p_j \in Q_i} (q,p_j,p)
\\
(q,p_j,p) & \to & (p_j,p) \wedge (q',p_j) \ \ \textrm{for every }p_j \in Q_i
\end{eqnarray*}
\end{enumerate}
\item
For $i=k$, there are the following transitions in $\mu'$.
\begin{enumerate}
\item
For each transition $(k,\sigma,\emptyset,q)\to(q',\ttRight) \in \mu$,
there is a transition $(q,\sigma,\emptyset)\to(q') \in \mu'$.
\item
For each transition $(k,\sigma,\emptyset,q)\to(q',\ttPlace) \in \mu$,
there are the following transitions in $\mu'$.
\begin{eqnarray*}
q & \to & \tilde{q},\{k\}
\\
\tilde{q} & \to & \bigvee_{p_j \in Q_k} (q,p_j)
\\
(q,p_j) & \to & p_j \wedge (q',p_j) \ \ \textrm{for every }p_j \in Q_k
\end{eqnarray*}
\end{enumerate}
\end{itemize}

We can show the following proposition by straightforward induction on $i$.
\begin{proposition}
\label{p: i-run lead to acceptance}
Let $w={\sigma_1\choose a_1}\cdots {\sigma_n\choose a_n}$ be a $\Sigma$-data word.
For $i=1,\ldots,k-1$,
there exists an $i$-run $[i,q_1,\theta_1]\vdash^*[i,q_2,\theta_2]$ of $\A$ on $w$
and $[i,q_2,\theta_2]\vdash [i+1,q_3,\theta_3]$
if and only if
the configuration $[\theta(i),(q_1,q_3),u_2\cdots u_k]$ of $\A'$ on $w$
leads to acceptance,
where $u_j=a_{\theta(j)}$, for $j=i+1,\ldots,k$.
\end{proposition}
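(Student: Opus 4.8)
The plan is to prove the biconditional by induction on the pebble index $i$, with a secondary induction on the length of the $i$-run. Throughout, I would maintain the invariant that, at the start of every simulated $i$-run, register~$j$ of $\A'$ holds the data value $a_{\theta(j)}$ for each $j>i$; the registers $2,\ldots,i$ are immaterial, since neither pebble~$i$ nor any pebble it later places consults them before overwriting. This invariant is exactly what forces the register-test set that $\A'$ reads off its registers to coincide with the pebble-equality set $V=\{l>i:a_{\theta(l)}=a_{\theta(i)}\}$ governing the transitions of $\A$, so that the two machines see the same guard at every step.

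First I would dispose of the base case $i=1$. As pebble~$1$ is innermost, it can never perform a $\ttPlace$, so every $1$-run is a block of $\ttRight$ and $\ttStay$ moves ending, at $\triangleright$, in a $\ttLift$ that returns control to pebble~$2$ in state $q_3$. Each such move is matched one-for-one by the type-(1) transition $((q,p),\sigma,V)\to(q',p)$ of $\A'$, which carries the guessed target $p=q_3$ unchanged in its second component, and the terminating lift corresponds to $\A'$ arriving at the diagonal state $(q_3,q_3)\in F'$. Checking both directions here is routine.

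For the inductive step I would assume the statement at $i-1$ and argue at $i$ by the secondary induction on run length, branching on the first move of pebble~$i$ from $[i,q_1,\theta_1]$. A $\ttRight$ move is treated exactly as in the base case and reduces, via the inner hypothesis, to a strictly shorter $i$-run. The substantive case is $\ttPlace$: here $\A$ drops pebble~$(i-1)$ at the current position, runs it through its entire life-cycle (an $(i-1)$-run), after which pebble~$i$ resumes in some state $p_j$ and the $i$-run continues. I would show this is mirrored precisely by the three Place transitions of $\A'$: storing $a_{\theta(i)}$ into register~$i$ (which re-establishes the invariant one level down, since for pebble~$(i-1)$ the relevant registers are now $i,i+1,\ldots,k$), disjunctively guessing the resumption state $p_j$, and then splitting conjunctively into the verify branch $(q',p_j)$ and the continue branch $(p_j,q_3)$. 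The verify branch leads to acceptance iff the $(i-1)$-run started in $q'$ indeed hands pebble~$i$ back in state $p_j$, which is exactly the outer hypothesis at $i-1$; the continue branch leads to acceptance iff the remaining shorter $i$-run does, which is the inner hypothesis. Identifying $\A'$'s conjunction with the sequential composition ``nested $(i-1)$-run, then the rest of the $i$-run'' in $\A$ then yields both directions.

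The step I expect to be the main obstacle is precisely this Place case, and for two reasons. In the forward direction one must check that the resumption state occurring in $\A$'s real run is always available as one of $\A'$'s disjunctive guesses $p_j$; in the backward direction one must glue an accepting $\A'$-computation, whose two conjuncts are branched independently, back into a single coherent $\A$-run in which the nested pebble really does return control in the same state $p_j$ that the continue branch assumed. The delicate bookkeeping underlying both is keeping the register invariant synchronised across the boundary of the nested run: register~$i$ must simultaneously be the value pebble~$(i-1)$ compares against during the nested computation and survive unchanged so that, when pebble~$i$ resumes, registers $i+1,\ldots,k$ still record $a_{\theta(i+1)},\ldots,a_{\theta(k)}$. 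Once this is verified the proposition follows, and a single further application of the same correspondence at the outermost level $i=k$ (using plain states and the final set $F$) gives $L(\A')=L(\A)$.
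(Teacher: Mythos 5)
Your proposal is correct and follows exactly the route the paper intends: the paper itself offers no details beyond the remark that the proposition follows ``by straightforward induction on $i$,'' and your outer induction on the pebble index with an inner induction on run length, together with the register invariant ($u_j=a_{\theta(j)}$ for $j>i$, lower registers immaterial) and the matching of the Guess--Split--Verify conjunction against ``nested $(i-1)$-run followed by the rest of the $i$-run,'' is precisely the argument the construction in Appendix~\ref{app: s: simulation weak k-pa} is designed to support. You correctly single out the Place case and the gluing of the two conjunctive branches as the only delicate points, so your write-up is in fact more complete than the paper's.
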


Then, by the definition of $\mu'$,
we can easily deduce the following.
For each $\ell = 1,\ldots,n$,
$$
[k,q_1,\theta_1]\vdash_{\sA} [k-1,q_2,\theta_2] \vdash^*_{\sA}
[k-1,q_3,\theta_3] \vdash_{\sA} [k,q_4,\theta_4] \vdash_{\sA} [k,q_5,\theta_5]
$$
is a $k$-run of $\A$ on $w$,
where
$\theta_1(k)=\theta_2(k)=\theta_3(k)=\theta_4(k)=\ell$ and $\theta_5(k)=\ell+1$ and
$\theta_2(k-1)=\ell$, $\theta_3(k-1)=n+1$
if and only if
\begin{eqnarray*}
\lbrack\ell,q_1,\#^{k-2}a_{\ell-1}\rbrack & \vdash & \lbrack\ell,\tilde{q}_1,\#^{k-2}a_{\ell}\rbrack
\\
\lbrack\ell,\tilde{q}_1,\#^{k-2}a_{\ell}\rbrack & \vdash & \lbrack\ell,(q_1,q_4),\#^{k-2}a_{\ell}\rbrack
\\
\lbrack\ell,(q_1,q_4),\#^{k-2}a_{\ell}\rbrack & \vdash & \lbrack\ell,q_4,\#^{k-2}a_{\ell}\rbrack
\\
\lbrack\ell,(q_1,q_4),\#^{k-2}a_{\ell}\rbrack & \vdash & \lbrack\ell,(q_2,q_4),\#^{k-2}a_{\ell}\rbrack
\\
\lbrack\ell,q_4,\#^{k-2}a_{\ell}\rbrack & \vdash & \lbrack\ell+1,q_5,\#^{k-2}a_{\ell}\rbrack
\end{eqnarray*}
and the configuration $[\ell,(q_2,q_4),\#^{k-2}a_{\ell}]$ leads to acceptance.

Now, the equivalence between $L(\A)$ and $L(\A')$ follows immediately.

\section{Equivalence between alternating and deterministic one-way weak $k$-PA}
\label{app: s: equivalence weak pa}

For every one-way alternating weak $k$-PA,
we will construct its equivalent one-way deterministic weak $k$-PA.
This is done in two steps.
\begin{enumerate}
\item
First, we transform the one-way alternating weak $k$-PA into
its equivalent one-way nondeterministic weak $k$-PA.
\item
Then, we transform the one-way nondeterministic weak $k$-PA
into its equivalent one-way deterministic weak $k$-PA.
\end{enumerate}
We present step~2 first.

\subsection{From nondeterministic to deterministic}
\label{app: ss: nondet to deterministic}

We start with the simple case.
We will show how to determinize nondeterministic weak $2$-PA.
The idea can be generalized to arbitrary number of pebbles.

Let $\A = \langle Q, q_0, F, \mu \rangle$
be a nondeterministic weak $2$-PA.
We start by normalizing the behavior of $\A$ as follows.
\begin{itemize}
\item[N1.]
For every configuration $\gamma$ of $\A$,
there exists a transition in $\mu$ that applies to it.
\item[N2.]
Only pebble~$2$ can enter a final state and
it does so only after it reads the right-end marker $\triangleright$.
\item[N3.]
Immediately after pebble~$2$ moves right,
pebble~$1$ is placed.
\item[N4.]
Pebble~$1$ is lifted only when
it reaches the right-end marker $\triangleright$.
\end{itemize}
Such normalization can be done by adding some extra states to $\A$.
This normalization N4 is especially important, as it implies that
nondeterminism on pebble~1 is now limited
only to deciding which state to enter.
There is no nondeterminism in choosing which action to take,
i.e. either to lift pebble~1 or to keep on moving right.

Next, we note that immediately after pebble~$1$ is lifted,
there can be two choices of actions for pebble~2:
\begin{itemize}
\item[-]
to place pebble~$1$ again; or
\item[-]
moves pebble~$2$ to the right.
\end{itemize}
The following fifth normalization is supposed to handle this situation:
\begin{itemize}
\item[N5.]
Immediately after pebble~$1$ is lifted,
pebble~$2$ moves right.
\end{itemize}
In other words, while pebble~$2$ is
reading a specific position, pebble~$1$ makes exactly one pass, from
the position of pebble~$2$ to the right end of the input,
instead of making several rounds of passes by placing pebble~$1$
again immediately after it is lifted.
Since there are only finitely many states,
there can only be finitely many passes.
So, the normalization N4 can be achieved by simultaneously simulating all
possible passes in one pass.

With the normalization N1--N5,
there is no nondeterminism in choosing which action
to take for pebble~2.
The same as for pebble~1,
the nondeterminism for pebble~2 is now limited
only in deciding which states to take.
This is summed up in the following remark.
\begin{remark}
\label{r: effect normalization}
For each $i=1,2$,
if $(i,P,V,p)\to(q_1,\ttAct_1)$ and $(i,P,V,p)\to(q_2,\ttAct_2)$,
then $\ttAct_1=\ttAct_2$.
\end{remark}

Now that the nondeterminism is reduced to
deciding which state to enter,
the determinization of $\A$ becomes straightforward.
Similar to the classical proof of the equivalence between
nondeterministic and deterministic finite state automata,
we can take the power set of the states of $\A$
to deterministically simulate $\A$.

Now the normalization steps N1--N5 can be performed similarly
for weak $k$-PA $\A$.
\begin{itemize}
\item[N1$'$.]
For every configuration $\gamma$ of $\A$,
there exists a transition in $\mu$ that applies to it.
\item[N2$'$.]
Only pebble~$k$ can enter a final state and
it does so only after it reads the right-end marker $\triangleright$.
\item[N3$'$.]
For each $i=2,\ldots,k$,
immediately after pebble~$i$ moves right,
pebble~$(i-1)$ is placed.
\item[N4$'$.]
For each $i=1,\ldots,k-1$,
pebble~$i$ is lifted only when
it reaches the right-end marker $\triangleright$.
\item[N5$'$.]
For each $i=1,\ldots,k-1$,
Immediately after pebble~$i$ is lifted,
pebble~$i+1$ moves right.
\end{itemize}
Such normalization results in reducing the nondeterminism to
deciding which state to enter.
Then, the determinization of $\A$ can be done just like in the classical case
as in the case of weak 2-PA described above.

The following are the details of the determinization of $\A = \langle \Sigma,Q,q_0,F\rangle$.

Let $P,V \subseteq \{1,\ldots,k\}$. For a subset $S \subseteq Q$, we
define the following:
\begin{eqnarray*}
E_{i,P,V}(S) & = & \{q : \exists q' \in S \textrm{ such that }
(i,P,V,q')\to(q,\ttAct) \in \mu \};
\\
A_{i,P,V}(S) & = & \ttAct \textrm{ where } (i,P,V,q) \to (q',\ttAct)
\textrm{ for some }q \in S \textrm{ and }q'\in Q.
\end{eqnarray*}
By Remark~\ref{r: effect normalization} above, $A_{i,P,V}(S)$ is
well defined. Furthermore, $E_{i,P,V}$ is monotone, that is, if $S
\subseteq S'$, then $E_{i,P,V}(S)\subseteq E_{i,P,V}(S')$.

Now we present the construction of a deterministic, weak $k$-pebble
automaton $\A'$ equivalent to $\A$. Let $\A' = \langle Q', q_0', F',
\mu' \rangle$ where
\begin{itemize}
\item[-]
$Q' = 2^{Q}$;
\item[-]
$q_0' = \{q_0\}$;
\item[-]
$F' = \{S \subseteq Q : S \cap F \neq \emptyset\}$;
\item[-]
$\mu'$ consists of the following transitions. For each $i \in
\{1,\ldots,k\}$, $P,V \subseteq \{i+1,\ldots,k\}$ and $S \subseteq
Q$,
$$
(i,P,V,S) \to (E_{i,P,V}(S),A_{i,P,V}(S)) \in \mu'.
$$
\end{itemize}
Recall that an {\em $i$-run} is a run from an $i$-configuration to
an $i$-configuration in which pebble~$i+1$ is never lifted.
We are going to use the following Claim~\ref{c: proof equivalence
weak PA} to prove that $L(\A')=L(\A)$.

\begin{claim}
\label{c: proof equivalence weak PA} Let $i=1,\ldots,k$. Let $w \in
\Sigma^\ast$ and $\theta_1,\theta_2$ be pebble assignments on $w$.
\begin{enumerate}
\item
For every set of states $S_1, S_2 \subseteq Q$, if
$[i,S_1,\theta_1]\vdash^\ast [i,S_2,\theta_2]$ is an $i$-run of
$\A'$, then
$$
\bigcup_{q_1\in S_1} \{\gamma : [i,q_1,\theta_1] \vdash^{*}_{\sA}
\gamma \textrm{ is an }i\textrm{-run}\}
 = \{[i,q_2,\theta_2]: q_2 \in S_2\}.
$$
\item
For every state $q_1, q_2 \subseteq Q$, if
$[i,q_1,\theta_1]\vdash^\ast [i,q_2,\theta_2]$ is an $i$-run of
$\A$, then for all $S_1 \subseteq Q$ such that $q_1 \in S_1$, there
exists $S_2 \subseteq Q$ such that $q_2 \in S_2$ and
$[i,S_1,\theta_1]\vdash^\ast [i,S_2,\theta_2]$ is an $i$-run of
$\A'$.
\end{enumerate}
\end{claim}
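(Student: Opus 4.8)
The plan is to establish Claim~\ref{c: proof equivalence weak PA} by a simultaneous induction on the pebble level $i$, running from the innermost pebble $i=1$ up to $i=k$, proving both parts at once. Part~1 is the aggregate soundness-and-completeness statement: the single deterministic $i$-run of $\A'$ out of $[i,S_1,\theta_1]$ tracks, inside its power-state, \emph{exactly} the set of states that the nondeterministic $\A$ can occupy along $i$-runs out of the states of $S_1$; Part~2 says that an individual $i$-run of $\A$ is never lost by $\A'$. In fact Part~2 is almost a corollary of Part~1 --- since $\A'$ is deterministic and total (by N1$'$), from $[i,S_1,\theta_1]$ it has a unique run, and Part~1 guarantees that the terminal state $q_2$ of any $\A$-run lands in the corresponding $S_2$ --- but it is cleanest to carry both through the induction, because the inductive step at level $i$ invokes \emph{both} parts at level $i-1$. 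Throughout, the decisive structural fact is that after the normalization N1$'$--N5$'$ the only residual nondeterminism is the choice of successor state: the action $\ttAct$ is forced, which is exactly what makes $A_{i,P,V}(S)$ well defined (Remark~\ref{r: effect normalization}) and lets $\A$ and $\A'$ run in pebble-assignment lock-step.

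For the base case $i=1$, pebble~$1$ is the most recently placed pebble and can never place a further pebble, so a $1$-run is simply pebble~$1$ sweeping rightward to the right-end marker (by N4$'$) and lifting there. The trajectory of positions is completely forced, and at each position the value-equality set $V$ (and positional set $P$) is determined by the fixed higher pebbles, so the run of $\A'$ is just the iteration $S \mapsto E_{1,P,V}(S)$ with the forced action $A_{1,P,V}(S)$. This is precisely the classical subset construction for a finite automaton along a predetermined input, and both parts follow from the definition of $E_{1,P,V}$ together with its monotonicity.

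For the inductive step I would assume both parts at level $i-1$ and analyse an $i$-run. By N3$'$ and N5$'$ the behaviour of pebble~$i$ alternates: it reads its position and moves right, \emph{immediately} places pebble~$i-1$, the nested $(i-1)$-run executes and ends by lifting pebble~$i-1$ at the right-end, and then pebble~$i$ moves right again. Thus an $i$-run is a sequence of pebble-$i$ transitions interleaved with nested $(i-1)$-runs acting as subroutines. I would use the induction hypothesis to replace each nested $(i-1)$-subroutine by its effect on power-states: the set of states in which $\A$ can finish the subroutine, aggregated over all branches, is exactly the power-state $\A'$ computes for it by Part~1 at level $i-1$, and no branch is lost by Part~2 at level $i-1$. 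Composing this with the single subset step $E_{i,P,V}$ for pebble~$i$'s own transition, and invoking monotonicity of $E_{i,P,V}$ to ensure that merging branches into one power-state never discards a reachable state, yields both parts at level $i$.

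The step I expect to be the main obstacle is the bookkeeping that reconciles the two sides of the equality in Part~1 while threading state-sets through the nested subroutines. The left-hand side ranges over \emph{all} $i$-run-reachable configurations $\gamma$, whereas the right-hand side lists only configurations with the single assignment $\theta_2$; the identity holds because the forced action makes every $\A$-branch share the pebble assignment of the unique $\A'$-branch at each step, so that the configurations reachable in the number of steps $\A'$ takes to arrive at $[i,S_2,\theta_2]$ all carry assignment $\theta_2$. Making this lock-step invariant explicit, and checking that the ``place / run subroutine / lift / move right'' cycle hands the correct incoming state-set to the $(i-1)$-subroutine and correctly absorbs its outgoing state-set, is the delicate part. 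Once the claim is in hand, $L(\A)=L(\A')$ follows by specializing to $i=k$, the initial power-state $\{q_0\}$, and the acceptance condition $F'=\{S : S\cap F\neq\emptyset\}$.
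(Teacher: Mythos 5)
Your proposal is correct and follows essentially the same route as the paper's proof: induction on the pebble index $i$, with the base case $i=1$ handled as the classical subset construction and the inductive step decomposing an $i$-run into repeated place / nested $(i-1)$-run / lift / move-right cycles, applying the induction hypothesis to the nested $(i-1)$-runs and the definition of $E_{i,P,V}$ (with the forced action from the normalization) to the remaining steps. The paper likewise proves both parts through the induction rather than deriving Part~2 from Part~1, so your choice to carry both matches its structure.
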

\begin{proof}
Let $w,\theta_1,\theta_2$ be as above. The proof of the claim is by
induction on $i$. The base case, $i=1$, is the same as the standard
finite state automaton, thus, omitted.

For the induction hypothesis, we assume that the claim is true for
the case of $i-1$. To proceed with the induction step, we prove the
claim for the case $i$.

We start by proving $(1)$. Let $S_1, S_2\subseteq Q$. Assume that
$$
[i,S_1,\theta_1]\vdash^\ast [i,S_2,\theta_2].
$$
By our normalization of $\A$, the resulting automaton $\A'$ from our
construction is also normalized as the automaton $\A$. Thus, an
$i$-run of $\A'$ is a repeated sequence of transitions relations of
the form:
$$
[i,R_1,\lambda_1]\vdash [i-1,R_2,\lambda_2]\vdash^\ast [i-1, R_3,
\lambda_3] \vdash [i,R_4,\lambda_4]\vdash [i,R_5,\lambda_5],
$$
where
\begin{itemize}
\item[{\em a)}]
some transition $(i,P_1,V_1,R_1)\to (R_2,\ttPlace)\in \mu'$ is
applied to obtain the transition relation $[i,R_1,\lambda_1]\vdash
[i-1,R_2,\lambda_2]$,
\item[{\em b)}]
the transition relation $[i-1,R_2,\lambda_2]\vdash^\ast [i-1, R_3,
\lambda_3]$ is an $(i-1)$-run of $\A'$,
\item[{\em c)}]
some transition $(i,P_2,V_2,R_3)\to (R_4,\ttLift)\in \mu'$ is
applied to obtain the transition relation $[i, R_3, \lambda_3]
\vdash [i,R_4,\lambda_4]$,
\item[{\em d)}]
some transition $(i,P_3,V_3,R_4)\to (R_5,\ttRight)\in \mu'$ is
applied to obtain the transition relation
$[i,R_4,\lambda_4]\vdash[i,R_5,\lambda_5]$.
\end{itemize}
Thus, to prove part~$(1)$, it suffices to prove the following four
equations. {\footnotesize
\begin{eqnarray}
\label{eq: a1} \bigcup_{p_1\in R_1} \{\gamma : [i,p_1,\lambda_1]
\vdash_{\sA} \gamma \} & = & \{ [i-1,p_2,\lambda_2] : p_2 \in R_2\}
\\
\label{eq: a2} \bigcup_{p_2\in R_2} \{\gamma :
[i-1,p_2,\lambda_2]\vdash_{\sA}^\ast \gamma \textrm{ is an
}(i-1)\textrm{-run}\} & = & \{[i-1,p_3,\lambda_3]: p_3 \in R_3\}
\\
\label{eq: a3} \bigcup_{p_3\in R_3} \{\gamma : [i-1,p_3,\lambda_3]
\vdash_{\sA} \gamma \} & = & \{ [i,p_4,\lambda_4] : p_4 \in R_4\}
\\
\label{eq: a4} \bigcup_{p_4\in R_4} \{\gamma : [i,p_4,\lambda_4]
\vdash_{\sA} \gamma \} & = & \{ [i,p_5,\lambda_5] : p_5 \in R_5\}
\end{eqnarray}
} \vspace{0.5 em}
\par \noindent
{\em Proof of} (\ref{eq: a1}): By item~{\em (a)} above, there is a
transition $(i,P_1,V_1,R_1)\to (R_2,\ttPlace)\in \mu'$. By the
construction of $\mu'$ that $R_2 = E_{i,P_1,V_1}(R_1)$, we
immediately have Equation~\ref{eq: a1}.

\vspace{0.5 em}
\par \noindent
{\em Proof of} (\ref{eq: a2}): By item~{\em (b)} above,
$[i-1,R_2,\lambda_2]\vdash^\ast [i-1, R_3, \lambda_3]$ is an
$(i-1)$-run of $\A'$. Equation~\ref{eq: a2} follows from the
induction hypothesis.

\vspace{0.5 em}
\par \noindent
{\em Proof of} (\ref{eq: a3}): By item~{\em (c)} above, there is a
transition $(i,P_2,V_2,R_3)\to (R_4,\ttLift)\in \mu'$. By the
construction of $\mu'$ that $R_4 = E_{i,P_2,V_2}(R_3)$, we
immediately have Equation~\ref{eq: a3}.

\vspace{0.5 em}
\par \noindent
{\em Proof of} (\ref{eq: a4}): By item~{\em (d)} above, there is a
transition $(i,P_3,V_3,R_4)\to (R_5,\ttRight)\in \mu'$. By the
construction of $\mu'$ that $R_5 = E_{i,P_3,V_3}(R_4)$, we
immediately have Equation~\ref{eq: a4}.

\vspace{1.0 em}

The proof of part~$(2)$ of our claim is very similar to the one of
part~$(1)$. For completeness, we present it here. Let $q_1,q_2 \in
Q$ and
$$
[i,q_1,\theta_1]\vdash^\ast [i,q_2,\theta_2]
$$
be an $i$-run of $\A$.

By our normalization of $\A$, an $i$-run of $\A'$ is a repeated
sequence of transitions relations of the form:
$$
[i,p_1,\lambda_1]\vdash [i-1,p_2,\lambda_2]\vdash^\ast [i-1, p_3,
\lambda_3] \vdash [i,p_4,\lambda_4]\vdash [i,p_5,\lambda_5],
$$
where
\begin{itemize}
\item[{\em a)}]
some transition $(i,P_1,V_1,p_1)\to (p_2,\ttPlace)\in \mu$ is
applied to obtain the transition relation $[i,p_1,\lambda_1]\vdash
[i-1,p_2,\lambda_2]$,
\item[{\em b)}]
the transition relation $[i-1,p_2,\lambda_2]\vdash^\ast [i-1, p_3,
\lambda_3]$ is an $(i-1)$-run of $\A'$,
\item[{\em c)}]
some transition $(i,P_2,V_2,p_3)\to (p_4,\ttLift)\in \mu'$ is
applied to obtain the transition relation $[i-1, p_3, \lambda_3]
\vdash [i-1,p_4,\lambda_4]$,
\item[{\em d)}]
some transition $(i,P_3,V_3,p_4)\to (p_5,\ttRight)\in \mu'$ is
applied to obtain the transition relation
$[i,p_4,\lambda_4]\vdash[i,p_5,\lambda_5]$.
\end{itemize}
The proof of part~$(2)$ follows from the following.
\begin{itemize}
\item
For all $R_1 \subseteq Q$ such that $p_1 \in R_1$, there exists $R_2
\subseteq Q$ such that $p_2 \in R_2$ and
$[i,R_1,\lambda_1]\vdash_{\sA'} [i-1,R_2,\lambda_2]$.
\\
This immediately follows from the fact that $p_2 \in
E_{i,P_1,V_1}(R_1)$ and $(i,P_1,V_1,R_1)\to
(E_{i,P_1,V_1}(R_1),\ttPlace)\in \mu'$.
\item
For all $R_2 \subseteq Q$ such that $p_2 \in R_2$, there exists $R_3
\subseteq Q$ such that $p_3 \in R_3$ and
$[i-1,R_2,\lambda_2]\vdash_{\sA'}^{\ast} [i-1,R_3,\lambda_3]$ is an
$(i-1)$-run.
\\
This follows from the induction hypothesis.
\item
For all $R_3 \subseteq Q$ such that $p_3 \in R_3$, there exists $R_4
\subseteq Q$ such that $p_4 \in R_4$ and
$[i-1,R_3,\lambda_3]\vdash_{\sA'} [i,R_4,\lambda_4]$.
\\
This immediately follows from the fact that $p_3 \in
E_{i,P_2,V_2}(R_3)$ and $(i,P_2,V_2,R_3)\to
(E_{i,P_2,V_2}(R_3),\ttLift)\in \mu'$.
\item
For all $R_4 \subseteq Q$ such that $p_4 \in R_4$, there exists $R_5
\subseteq Q$ such that $p_5 \in R_5$ and
$[i,R_4,\lambda_4]\vdash_{\sA'} [i,R_5,\lambda_5]$.
\\
This immediately follows from the fact that $p_4 \in
E_{i,P_3,V_3}(R_4)$ and $(i,P_3,V_3,R_4)\to
(E_{i,P_3,V_3}(R_4),\ttRight)\in \mu'$.
\end{itemize}
\end{proof}

\subsection{From alternating to nondeterministic}
\label{app: ss: alternating to nondet}

Let $\A = \langle \Sigma, Q,q_0,\mu,F,U\rangle$
be one-way alternating weak $k$-PA.
Adding some extra states,
we can normalize $\A$ as follows.
\begin{itemize}
\item
For every $p \in U$,
if $(i,\sigma,V,p)\to (q,\ttAct)\in \mu$,
then $\ttAct = \ttStay$.
\item
Every pebble can be lifted only
after it reads the right-end marker $\triangleright$.
\item
Only pebble~$k$ can enter a final state
and it does so only after it reads the right-end marker $\triangleright$.
\end{itemize}

We assume that $Q$ is partitioned into $Q_1 \cup \cdots \cup Q_k$
where $Q_i$ is the set of states,
where
pebble $i$ is the head pebble, for each $i=1,\ldots,k$.
We can further partition each $Q_i$ into four sets of states:
$Q_{i,\sttStay}$, $Q_{i,\sttRight}$, $Q_{i,\sttPlace}$, $Q_{i,\sttLift}$ such that
for every $i$, $\sigma$, $V$, $q$ and $p$,
\begin{itemize}
\item
if $q \in Q_{i,\sttStay}$ and $(i,\sigma,V,q) \to (p,\ttAct) \in \mu$,
then $\ttAct = \ttStay$;
\item
if $q \in Q_{i,\sttRight}$ and $(i,\sigma,V,q) \to (p,\ttAct) \in \mu$,
then $\ttAct = \ttRight$;
\item
if $q \in Q_{i,\sttPlace}$ and $(i,\sigma,V,q) \to (p,\ttAct) \in \mu$,
then $\ttAct = \ttPlace$;
\item
if $q \in Q_{i,\sttLift}$ and $(i,\sigma,V,q) \to (p,\ttAct) \in \mu$,
then $\ttAct = \ttLift$.
\end{itemize}
Now we define a nondeterministic weak $k$-PA
$\A' = \langle \Sigma,Q',q_0',\mu',F'\rangle$, where
\begin{itemize}
\item
$Q'$ consists of states of the form $(S_k,S_{k-1}\ldots,S_{k-j})\in 2^{Q_{k}}\times 2^{Q_{k-1}}\times 2^{Q_{k-j}}$,
where $0\leq j \leq k-1$;
\item
$q_0'=\{q_0\}$;
\item
$F' = 2^{F} - \{\emptyset\}$.
\end{itemize}

The set $\mu'$ contains the following transitions.
For every $i=1,2,\ldots,k$,
for every $V \subseteq \{i+1,\ldots,k\}$,
for every $(S_k,S_{k-1},\ldots,S_i) \in Q'$,
for every $\sigma \in \Sigma$,
we have the following transitions.
\begin{itemize}
\item
If $S_i$ contains a state $q \in U$,
then
$$
(i,\sigma,V,(S_k,S_{k-1},\ldots,S_i)) \to ((S_k,S_{k-1},\ldots,(S_i - \{q\})\cup U_q),\ttStay) \in \mu'
$$
where
$U_q = \{p \mid (i,\sigma,V,q)\to(p,\ttStay) \in \mu\}$.

\item
If $S_i$ contains a state $q \in Q_{i,\sttStay}$ and $S \cap U = \emptyset$,
then
$$
(i,\sigma,V,(S_k,S_{k-1},\ldots,S_i)) \to ((S_k,S_{k-1},\ldots,(S_i-\{q\})\cup N_q),\ttStay) \in \mu'
$$
where
$N_q \subseteq \{p \mid (i,\sigma,V,q)\to(p,\ttStay) \in \mu\}$.

\item
If $S_i$ contains a state $q \in Q_{i,\sttPlace}$ and $S_i \cap Q_{i,\sttStay} = \emptyset$,
then
$$
(i,\sigma,V,(S_k,S_{k-1},\ldots,S_i)) \to ((S_k,S_{k-1},\ldots,S_i-\{q\},\{p\}),\ttPlace) \in \mu'
$$
where $(i,\sigma,V,q)\to(p,\ttPlace) \in \mu$.

\item
If $S_i \subseteq Q_{i,\sttRight}$,
then
$$
(i,\sigma,V,(S_k,S_{k-1},\ldots,S_i)) \to ((S_k,S_{k-1},\ldots,S_i') ,\ttRight) \in \mu'
$$
where $S_i' = \{p \mid (i,\sigma,V,q)\to(p,\ttRight) \in \mu \textrm{ and }q \in S\cap Q_i\}$.

\item
If $S_i \subseteq Q_{i,\sttLift}$,
then
$$
(i,\triangleright,V,(S_k,S_{k-1},\ldots,S_{i+1},S_i)) \to ((S_k,S_{k-1},\ldots,S_{i+1}\cup R) ,\ttLift) \in \mu'
$$
where $R = \{p \mid (i,\sigma,V,q)\to(p,\ttLift) \in \mu \textrm{ and }q \in S_i\}$.
\end{itemize}
The proof that $L(\A)=L(\A')$ is pretty much similar to the one in
the previous subsection.

Let $S_k,S_{k-1},\ldots,S_i$ and $\theta$ such that
$$
[k,\{q_0\},\theta_0] \vdash_{\sA'}^* [i,(S_k,S_{k-1},\ldots,S_i),\theta].
$$
For each $j=k,k-1,\ldots,i$, we define $\theta_j$ to be
$\theta$ restricted to the domain $\{k,k-1,\ldots,j\}$.
Then, by straightforward induction on $i$,
we can show that
$$
[k,q_0,\theta_0] \vdash_{\sA}^* [j,q,\theta_j], \textrm{ for each }j=k,k-1,\ldots,i.
$$

Now we show the converse direction.
Let $\theta$ be an assignment for pebbles~$k,k-1,\ldots,i$ and
we denote by $\theta_j$ the pebble assignment $\theta$ restricted to the domain $\{k,k-1,\ldots,j\}$
when $j\geq i$.
Let $S_k,S_{k-1},\ldots,S_i$ be subsets of $Q_k,Q_{k-1},\ldots,Q_i$, respectively, and
let $p_k,p_{k-1},p_{i+1}$ be an element of $Q_k,Q_{k-1},\ldots,Q_i$ such that
for each $j=k,k-1,\ldots,i$,
\begin{itemize}
\item
$p_j \notin S_j$;
\item
$[k,q_0,\theta_0] \vdash_{\sA}^* [j,q,\theta_j]$, for each $q \in S_j$.
\end{itemize}
Then, by straightforward induction on $i$,
we can show that
$$
[k,\{q_0\},\theta_0] \vdash_{\sA'}^* [i,(S_k,S_{k-1},\ldots,S_i),\theta].
$$

\end{document}